%%%%%%%%%%%%%%%%%%%%%%% file template.tex %%%%%%%%%%%%%%%%%%%%%%%%%
%
% This is a general template file for the LaTeX package SVJour3
% for Springer journals.          Springer Heidelberg 2010/09/16
%
% Copy it to a new file with a new name and use it as the basis
% for your article. Delete % signs as needed.
%
% This template includes a few options for different layouts and
% content for various journals. Please consult a previous issue of
% your journal as needed.
%
%%%%%%%%%%%%%%%%%%%%%%%%%%%%%%%%%%%%%%%%%%%%%%%%%%%%%%%%%%%%%%%%%%%
%
%\RequirePackage{fix-cm}
%
%\documentclass{svjour3}                     % onecolumn (standard format)
%\documentclass[smallcondensed]{svjour3}     % onecolumn (ditto)
%\documentclass[smallextended]{svjour3}       % onecolumn (second format)
%\documentclass[twocolumn]{svjour3}          % twocolumn
\documentclass{article}
%
%\smartqed  % flush right qed marks, e.g. at end of proof
%
\usepackage{graphicx}
\usepackage[latin9]{inputenc}
\usepackage{amsmath}
\usepackage{amsthm}
\usepackage{amssymb}
\usepackage{esint}
\usepackage{enumitem}
\usepackage[english]{babel}
\usepackage{latexsym}
\usepackage{cite}
\usepackage{dsfont}
\usepackage{babel}
\usepackage{color}
\usepackage[overload]{empheq}
\usepackage{algorithm}
\usepackage{algorithmicx}
\usepackage{algpseudocode}
\usepackage{nicefrac}

\inputencoding{utf8}

\makeatletter
\newenvironment{breakablealgorithm}
  {% \begin{breakablealgorithm}
   \begin{center}
     \refstepcounter{algorithm}% New algorithm
     \hrule height.8pt depth0pt \kern2pt% \@fs@pre for \@fs@ruled
     \renewcommand{\caption}[2][\relax]{% Make a new \caption
       {\raggedright\textbf{\ALG@name~\thealgorithm} ##2\par}%
       \ifx\relax##1\relax % #1 is \relax
         \addcontentsline{loa}{algorithm}{\protect\numberline{\thealgorithm}##2}%
       \else % #1 is not \relax
         \addcontentsline{loa}{algorithm}{\protect\numberline{\thealgorithm}##1}%
       \fi
       \kern2pt\hrule\kern2pt
     }
  }{% \end{breakablealgorithm}
     \kern2pt\hrule\relax% \@fs@post for \@fs@ruled
   \end{center}
  }
\makeatother

\makeatletter
\renewcommand{\Function}[2]{%
  \csname ALG@cmd@\ALG@L @Function\endcsname{#1}{#2}%
  \def\jayden@currentfunction{#1}%
}
\newcommand{\funclabel}[1]{%
  \@bsphack
  \protected@write\@auxout{}{%
    \string\newlabel{#1}{{\jayden@currentfunction}{\thepage}}%
  }%
  \@esphack
}
\makeatother

% Nested distance of order r
%\DeclareMathOperator{\nd}{\mathsf {d\!I}^{\scriptscriptstyle{\!(\!r\!)}}}
%\DeclareMathOperator{\ndr}{\mathsf {d\!I}_r}
\DeclareMathOperator{\avar}{\mathbb A\mathbb V@R}
% P_\epsilon
\DeclareMathOperator*{\Pep}{{\mathcal P}_{\!\!\varepsilon}}
% Probability measure P^(j)

% Baseline probability measure P^\hat
\DeclareMathOperator{\Ph}{\mathbb P_0}

\makeatletter
\@ifundefined{showcaptionsetup}{}{%
 \PassOptionsToPackage{caption=false}{subfig}}
\usepackage{subfig}
\makeatother

\makeatletter
\def\maketag@@@#1{\hbox{\m@th\normalfont\normalsize#1}}
\makeatother

\newtheorem{defn}{\protect\definitionname}
\newtheorem{prop}{\protect\propositionname}

\newtheorem{lem}{\protect\lemmaname}
\newtheorem{rem}{\protect\remarkname}
\newtheorem{defn*}{\protect\definitionname}
\newtheorem{thm}{\protect\theoremname}
\newtheorem{example}{\protect\examplename}

\providecommand{\theoremname}{Theorem}
\providecommand{\definitionname}{Definition}
\providecommand{\examplename}{Example}
\providecommand{\lemmaname}{Lemma}
\providecommand{\propositionname}{Proposition}
\providecommand{\remarkname}{Remark}
\providecommand{\corollaryname}{Corollary}

\newcommand{\Fc}{\mathcal{F}}
\newcommand{\PP}{\mathbb{P}}
\newcommand{\citet}{\cite}
\DeclareMathOperator{\nd}{\mathsf{d\kern-.15ex I}}	% nested distance	\mathbf d
%\newcommand{\dd}{\mathchoice {\hbox{d\kern -0.14em l}}{\hbox{d\kern -0.14em l}}{\small{d\kern -0.14em l}}{\small{d\kern -0.14em l}}}

%\journalname{Mathematical Programming, Series B}

\begin{document}

\title{Incorporating statistical model error into the calculation of acceptability prices of contingent claims}

\author{Martin Glanzer  \and
        Georg Ch. Pflug \and Alois Pichler
}

\date{}

%\authorrunning{Short form of author list} % if too long for running head

%\date{Received: date / Accepted: date}
% The correct dates will be entered by the editor

\setlist[enumerate]{align=left}

\maketitle

\begin{abstract}

\noindent The determination of acceptability prices of contingent claims requires the choice of a stochastic model for the underlying asset price dynamics. Given this model, optimal bid and ask prices can be found by stochastic optimization. However, the model for the underlying asset price process is typically based on data and found by a statistical estimation procedure. We define a confidence set of possible estimated models by a nonparametric neighborhood of a baseline model. This neighborhood serves as ambiguity set for a multistage stochastic optimization problem under model uncertainty. We obtain distributionally robust solutions of the acceptability pricing problem and derive the dual problem formulation. Moreover, we prove a general large deviations result for the nested distance, which allows to relate the bid and ask prices under model ambiguity to the quality of the observed data.

%\keywords{multistage stochastic optimization \and distributionally robust optimization \and model ambiguity \and confidence regions \and nested distance \and Wasserstein distance \and acceptability pricing \and bid-ask spread }
%\subclass{90C15 \and 91B28 \and 52A41 \and 62P05}
\end{abstract}

\section{Introduction}
\label{sec: Introduction}

The no-arbitrage paradigm is the cornerstone of mathematical finance. The fundamental work of Harrison, Kreps and Pliska \cite{HarrisonKreps79,HarrisonPliska81,HarrisonPliska83,Kreps81} and Delbaen and Schachermayer \cite{DelbaenSchachermayer94}, to mention some of the most important contributions, paved the way for a sound theory for the pricing of contingent claims. In a general market model, the exclusion of arbitrage opportunities leads to intervals of fair prices.

Typically, the resulting no-arbitrage price bounds are too wide to provide practically meaningful information.\footnote{For example, the superreplication price for a plain vanilla call option in exponential L\'{e}vy models is given by the spot price of the underlying asset (see Cont and Tankov \cite[Prop. 10.2]{ContTankov04}), which is a trivial upper bound for the call option price.} In practice, market-makers wish to have a framework for controlling the acceptable risk when setting their spreads. Pioneering contributions to incorporate risk in the pricing procedure for contingent claims were made by Carr, Geman and Madan \cite{CarrGemanMadan01} as well as F\"ollmer and Leukert \cite{FollmerLeukert99, FollmerLeukert00}, subsequent generalizations being made, e.g., by Nakano \cite{Nakano04} or Rudloff \cite{Rudloff07}. The pricing framework of the present paper is in this spirit: by specifying acceptability functionals, an agent may control her shortfall risk in a rather intuitive manner. In particular, using the Average-Value-at-Risk ($\avar_\alpha$) will allow for a whole range of prices between the extreme cases of hedging with probability one (the traditional approach) and hedging w.r.t.\ expectation by varying the parameter $\alpha\,$.

Nowadays, there is great awareness of the epistemic uncertainty inherent in setting up a stochastic model for a given problem. For single-stage and two-stage situations, there is a plethora of available literature on different approaches to account for model ambiguity (see the lists contained in \cite[pp.\ 232--233]{PflugPichler14} or \cite[p.\ 2]{VanParysetal17}). Recently, balls w.r.t.\ the Kantorovich-Wasserstein distance around an estimated model have gained a lot of popularity (e.g., \cite{EsKu17, HanasusantoKuhn17, GaoKleywegt2016, ZhaoGuan18, Nguyenetal18, Duanetal18}), while originally proposed by Pflug and Wozabal \cite{PflugWozabal07} in 2007. However, the literature on nonparametric ambiguity sets for multistage problems is still extremely sparse. Analui and Pflug \cite{AnaluiPflug14} were the first to study balls w.r.t.\ the multistage generalization of the Kantorovich-Wasserstein distance, named nested distance,\footnote{The definition of the nested distance can be found in the Appendix.} for incorporating model uncertainty into multistage decision making. It is the aim of this article to further explore this rather uncharted territory. The classic mathematical finance problem of contingent claim pricing serves as a very well suited instance for doing so. In fact, while in the traditional pointwise hedging setup only the null sets of the stochastic model for the dynamics of the underlying asset price process influence the resulting price of a contingent claim, the full specification of the model affects the claim price when acceptability is introduced. Thus, model dependency is even stronger in the latter case, which is the topic of this paper.

Stochastic optimization offers a natural framework to deal with the problems of mathematical finance. Application of the fundamental work of Rockafellar and Wets \cite{Rock74,RockafellarWets76,RockafellarWets76a,RockafellarWets76b,RockafellarWets76c,RockafellarWets77,RockafellarWets78} on conjugate duality and stochastic programming has led to a stream of literature on those topics. King \cite{King02} originally formulated the problem of contingent claim pricing as a stochastic program. Extensions of this approach have been made, amongst others, by King, Pennanen and their coauthors \cite{KingKoivuPennanen05, PennanenKing04, KingKorf02, King02, King2010, Pennanen11, Pennanen14}, Kallio and Ziemba \cite{KallioZiemba07} or Dahl \cite{Dahl17}. The stochastic programming approach naturally allows for incorporating features and constraints of real-world markets and allows to efficiently obtain numerical results by applying the powerful toolkit of available algorithms for convex optimization problems.

The main contribution of this article is the link between statistical model error and the pricing of contingent claims, where the pricing methodology allows for a controlled hedging shortfall. The setup is inspired by practically very relevant aspects of decision making under both aleatoric and epistemic uncertainty. Given the stochastic model from which future evolutions are drawn, agents are willing to accept a certain degree of risk in their decisions. How\-ever, it may be dangerously misleading to neglect the fact that it is impossible to detect the true model without error. Thus, a distributionally robust framework, which takes the limitations of nonparametric statistical estimation into account, is required. In the statistical terminology, balls w.r.t.\ the nested distance may be seen as confidence regions: by considering all models whose nested distance to the estimated baseline model does not exceed some threshold, it is ensured that the true model is covered with a certain probability and hence the decision is robust w.r.t.\ the statistical model estimation error. In particular, we prove a large deviations theorem for the nested distance, based on which we show that a scenario tree can be constructed out of data such that it converges (in terms of the nested distance) to the true model in probability at an exponential rate. Thus, distributionally robust claim prices w.r.t.\ nested distance balls as ambiguity sets include a hedge under the true model with arbitrary high probability, depending on the available data. In other words, we provide a framework that allows for setting up bid and ask prices for a contingent claim which result from finding hedging strategies with truly calculated risks, since the important factor of model uncertainty is not neglected.

This paper is organized as follows. In Section~\ref{sec:Acceptability-pricing} we introduce our framework for acceptability pricing, i.e., we replace the traditional almost sure super-/\,subreplication requirement by the weaker constraint of an acceptable hedge. The acceptability condition is formulated w.r.t.\ one given probability model. This lowers the ask price and increases the bid price such that the bid-ask spread may be tightened or even closed. Section~\ref{sec: ambiguity models} contains the main results of this article. We weaken the assumption of one single probability model assuming that a collection of models is plausible. In particular, we define the distributionally robust acceptability pricing problem and derive the dual problem formulation under rather general assumptions on the ambiguity set. The effect of the introduction of acceptability and ambiguity into the classical pricing methodology is nicely mirrored by the dual formulations. Moreover, we give a strong statistical motivation for using nested distance balls as ambiguity sets by proving a large deviations theorem for the nested distance. Section \ref{sec:Numerical Solutions} contains illustrative examples to visualize the effect of acceptability and model ambiguity on contingent claim prices. In Section~\ref{sec: algorithmic solution} we discuss the algorithmic solution of the $\avar$-acceptability pricing problem w.r.t.\ nested distance balls as ambiguity sets. In particular, we exploit the duality results of Section~\ref{sec: ambiguity models} and the special stagewise structure of the nested distance by a sequential linear programming algorithm which yields approximate solutions to the originally semi-infinite non-convex problem. In this way, we overcome the current state-of-the-art computational methods for multistage stochastic optimization problems under non-parametric model ambiguity. Finally, we summarize our results in Section~\ref{ref: conclusio}.
%\enlargethispage{\baselineskip}

\section{Acceptability pricing\label{sec:Acceptability-pricing}}

\subsection{Acceptability functionals}
\label{subsec: acceptability functionals}

The terminology introduced in this section follows the book of Pflug and R\"omisch \cite{PflugRoemisch07}. A detailed
discussion of acceptability functionals and their properties can be found therein. Intuitively speaking, an acceptability functional $\mathcal A$ maps a stochastic position $Y \in L_p(\Omega), 1<p<\infty,$ defined on a probability space $(\Omega, \mathcal{F}, \mathbb P)$,  to the real numbers extended by $-\infty$ in such a way that higher values of the position correspond to higher values of the functional, i.e., a `higher degree of acceptance'. In particular, the defining properties of an acceptability functional are \textit{translation equivariance},\footnote{$\mathcal A(Y+c) = \mathcal A(Y) + c$ for any $c \in \mathbb R$} \emph{concavity}, \emph{monotonicity},\footnote{$X \leq Y \textnormal{ a.s. } \Longrightarrow \mathcal A(X) \leq \mathcal A(Y)$} and \emph{positive homogeneity}. We assume all acceptability functionals to be \emph{version independent},\footnote{For version independent acceptability functionals, upper semi-continuity follows from concavity (see Jouini, Schachermayer and Touzi \cite{JouiniSchachermayerTouzi06}).} i.e., $\mathcal A(Y)$ depends only on the distribution of the random variable $Y$.

The following proposition is well-known. It follows directly from the Fenchel-Moreau-Rockafellar Theorem (see \cite[Th.~5]{Rock74} and \cite[Th.~2.31]{PflugRoemisch07}).

\begin{prop}
\label{thm: implications from some acceptability functional types}
An acceptability functional $\mathcal{A}$ which fulfills the above conditions has a dual representation of the form
\begin{align*}
\mathcal{A}(Y) = \inf \left\{ \mathbb{E}\left[YZ\right]\colon Z\in\mathcal{Z} \right\},
\end{align*}
where $\mathcal{Z}$ is a closed convex subset of $L^q(\Omega)$, with $1/p+1/q=1\,$. We call $\mathcal{Z}$ the superdifferential of $\mathcal{A}$. Monotonicity and translation equivariance imply that all $Z \in \mathcal{Z}$ are nonnegative densities.
\end{prop}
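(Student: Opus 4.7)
The plan is a direct application of the Fenchel-Moreau-Rockafellar biconjugate theorem in the duality $\langle L^p(\Omega),L^q(\Omega)\rangle$, combined with a positive-homogeneity argument that collapses the conjugate into an indicator, followed by two elementary tests that pin down the shape of the dual set. First I would record that concavity together with version independence implies upper semi-continuity on $L^p$, via the footnote result of Jouini-Schachermayer-Touzi already cited; then $-\mathcal{A}$ is a proper, lower semi-continuous convex function into $\mathbb{R}\cup\{+\infty\}$, and the biconjugate theorem delivers
\begin{align*}
\mathcal{A}(Y)\;=\;\inf_{Z\in L^q(\Omega)}\bigl\{\mathbb{E}[YZ]-\mathcal{A}^{\#}(Z)\bigr\},\qquad \mathcal{A}^{\#}(Z):=\inf_{Y\in L^p(\Omega)}\bigl\{\mathbb{E}[YZ]-\mathcal{A}(Y)\bigr\}.
\end{align*}

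Next I would exploit positive homogeneity. It forces $\mathcal{A}(0)=0$, so the map $Y\mapsto \mathbb{E}[YZ]-\mathcal{A}(Y)$ is itself positively homogeneous, whence its infimum over $Y\in L^p$ is either $0$ (attained at $Y=0$) or $-\infty$. Setting
\begin{align*}
\mathcal{Z}\;:=\;\bigl\{Z\in L^q(\Omega)\colon \mathbb{E}[YZ]\geq \mathcal{A}(Y)\text{ for all }Y\in L^p(\Omega)\bigr\},
\end{align*}
one sees that $\mathcal{A}^{\#}$ equals $0$ on $\mathcal{Z}$ and $-\infty$ off it, so the biconjugate representation collapses to the claimed $\mathcal{A}(Y)=\inf_{Z\in\mathcal{Z}}\mathbb{E}[YZ]$; convexity and closedness of $\mathcal{Z}$ are immediate from its description as an intersection of closed half-spaces.

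Finally I would read off the two structural properties of the elements of $\mathcal{Z}$ by testing the representation against carefully chosen random variables. For nonnegativity: if some $Z_0\in\mathcal{Z}$ had $\mathbb{P}(Z_0<0)>0$, then $Y:=\mathbf{1}_{\{Z_0<0\}}\geq 0$ would, by monotonicity and $\mathcal{A}(0)=0$, satisfy $\mathcal{A}(Y)\geq 0$, whereas $\mathbb{E}[YZ_0]<0$, contradicting $Z_0\in\mathcal{Z}$. For the normalisation $\mathbb{E}[Z]=1$: translation equivariance applied to the constant $Y\equiv 0$ yields $\mathcal{A}(c)=c$ for every $c\in\mathbb{R}$, so $c=\inf_{Z\in\mathcal{Z}}c\,\mathbb{E}[Z]$; taking $c>0$ and $c<0$ separately forces $\mathbb{E}[Z]=1$ for every $Z\in\mathcal{Z}$. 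The only (rather modest) obstacle is sign-bookkeeping: the Fenchel-Moreau theorem is usually stated for convex, lower semi-continuous functions, so one must mind the conventions when passing to the concave functional $\mathcal{A}$, and one should confirm that upper semi-continuity holds in a topology compatible with $\langle L^p,L^q\rangle$ so that the biconjugate recovers $\mathcal{A}$ itself rather than merely its upper semi-continuous envelope.
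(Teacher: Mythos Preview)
Your argument is correct and is precisely the Fenchel--Moreau--Rockafellar route that the paper invokes; the paper merely cites \cite[Th.~5]{Rock74} and \cite[Th.~2.31]{PflugRoemisch07} without spelling out the details, whereas you supply them. One small tightening: when you deduce $\mathbb{E}[Z]=1$, it is cleaner to use the defining inequality of $\mathcal{Z}$ directly (plugging $Y=c$ into $\mathbb{E}[YZ]\ge\mathcal{A}(Y)=c$ for $c>0$ and $c<0$) rather than the infimum identity, since the latter by itself only pins down $\inf_{Z}\mathbb{E}[Z]$ and $\sup_{Z}\mathbb{E}[Z]$.
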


{\bf Assumption A1.} There exists some constant $K_1 \in \mathbb R$ such that for all $Z \in \mathcal Z$ it holds $\Vert Z \Vert_q \leq K_1\,$.

This assumption implies that $\mathcal{A}$ is Lipschitz on $L_p$:
\begin{equation}\label{LipschitzLp}
|\mathcal{A}(Y_1)-\mathcal{A}(Y_2)| \le K_1 \; \|Y_1 - Y_2\|_p \, .
\end{equation}

A good example for such an acceptability functional is the Average Value-at-Risk, $\avar_\alpha$, whose superdifferential is given by
\[\mathcal{Z} = \{Z\in L_1(\Omega)\colon 0\le Z \le 1/\alpha\text{ and } \mathbb{E}(Z) = 1 \}.\]
The extreme cases are represented by the essential infimum ($\avar_{0}(Y) := \lim_{\alpha \downarrow 0} \avar_\alpha(Y) = \operatorname{essinf}(Y)$\footnote{Strictly speaking, Assumption A1 is not respected by $\avar_0\,$. However, all our results on $\avar$--acceptability pricing will hold true also for $\avar_0\,$. In fact, this is the special case which is well treated in the literature.}) and the expectation ($\alpha=1$). Its superdifferentials are given by the set of all probability densities and just the function identically~$1$, respectively.

Other common names for the $\avar$ are Conditional-Value-at-Risk, Tail-Value-at-Risk, or Expected Shortfall. The subtleties between these terminologies are, e.g., addressed in Sarykalin et~al. \cite{Sarykalinetal08}. All our computational studies in Section \ref{sec:Numerical Solutions} and Section \ref{sec: algorithmic solution} will be based on some $\avar_\alpha$, while our theoretical results are general.

\subsection{Acceptable replications\label{sub:Hedging-to-acceptability}}

Let us now introduce the notion of acceptability in the pricing procedure for contingent claims.

As usual in mathematical finance, we consider a market model as a filtered probability space $(\Omega,\mathcal{F},\mathbb{P})$, where the filtration is given by the increasing sequence of sigma-algebras $\mathcal{F}=(\mathcal{F}_0, \mathcal{F}_1, \dots, \mathcal{F}_T)$ with $\mathcal{F}_0=\{\emptyset,\Omega\}$. The liquidly traded basic asset prices are given by a  discrete-time $\mathbb{R}_+^{m}$-valued stochastic process  $S = (S_0, \dots, S_T)$, where $S_t=(S_t^{(1)}, S_t^{(2)},\dots, S_t^{(m)})$.  We assume the filtration to be generated by the asset price process.

One asset, denoted by $S^{(1)}$, serves as num\'eraire (a risk-less bond, say).
We assume w.l.o.g. that $S_t^{(1)} =1$ a.s. If not, we may replace $(S_t^{(1)}, S_t^{(2)},\dots, S_t^{(m)})$ by $(1, S_t^{(2)}/S_t^{(1)},\dots, S_t^{(m)}/S_t^{(1)})$.

A contingent claim $C$ consists of an $\mathcal{F}$-adapted series of cash flows $C=(C_{1},\ldots,C_{T})$ measured in units of the num\'eraire. The fact that the payoff $C_{t}$ is contingent on the respective state of the market up to  time $t$ is reflected by the condition that $C$ is adapted to the filtration $\mathcal{F}$, for which we write $C \lhd \mathcal{F}$. A trading strategy $x=(x_0,\ldots, x_{T-1})$ is an $\mathcal{F}$-adapted $\mathbb{R}^{m}$-valued process with $x \lhd \mathcal{F}$.

To be more precise, let
\begin{AmSalign*}
\mathcal{L}^m_p &:= \mathbb{R}^m \times L_p^m(\Omega,\mathcal{F}_1) \times \dots \times L_p^m(\Omega,\mathcal{F}_T) \, , \\
\mathcal{L}^m_\infty &:= \mathbb{R}^m \times L_\infty^m(\Omega,\mathcal{F}_1) \times \dots \times L_\infty^m(\Omega,\mathcal{F}_{T-1}) \, , \\
\intertext{and}
\mathcal{L}^1_q &:= L_q(\Omega,\mathcal{F}_1) \times \dots \times L_q(\Omega,\mathcal{F}_T) \, .
 \end{AmSalign*}
We assume that $S \in \mathcal{L}^m_p$, $x \in \mathcal{L}_\infty^m$ and $C \in \mathcal{L}_p^1$. The norm in $L^m_p$ is given by
$$\|Y\|_p = \sum_{i=1}^m \|Y^{(i)}\|_p \, ,$$
and similarly for $L_\infty^m\,$. Notice that $x_0$ and $S_0$ are deterministic vectors.

{\bf Assumption A2.} We assume that all claims are Lipschitz-continuous functions of the underlying asset price process $S$.

\begin{defn}
\label{DefiAcceptablePricing}
Consider a contingent claim $C$ and fix acceptability functionals $\mathcal{A}_{t}$, for all $t=1,\ldots,T$. We assume that all functionals $\mathcal{A}$ have a representation given by Proposition \ref{thm: implications from some acceptability functional types}. Then the acceptable prices are given by the optimal values of the following stochastic optimization programs:
\begin{enumerate}
\item[i)]  the \textnormal{\textbf{acceptable ask price}} of $C$ is defined as
{\small
\begin{subequations}
    \begin{align}[left = (\textnormal{P})\hspace{2mm}\empheqlVert\,]
     \pi_{a}(\mathcal{A}_{1},\dots,\mathcal{A}_{T}) = \min_{x}~& x_{0}^{\top}S_{0} \notag\\
\textnormal{s.t. }& \mathcal{A}_{t}(x_{t-1}^{\top}S_{t}-x_{t}^{\top}S_{t}-C_{t})\geq0 \label{Pa}\\
&\mathcal{A}_{T}(x_{T-1}^{\top}S_{T}-C_{T})\geq0 \label{Pb}\, ,
    \end{align}
    \end{subequations}}

\item[ii)] the \textnormal{\textbf{acceptable bid price}} of $C$ is defined as
{\small
\begin{subequations}
    \begin{align}[left = (\textnormal{P}^\prime) \hspace{2mm}\empheqlVert\,]
     \pi_{b}(\mathcal{A}_{1},\dots,\mathcal{A}_{T}) = \max_x~ & x_{0}^{\top}S_{0} \notag\\
\textnormal{s.t. }  & \mathcal{A}_{t}(x_{t}^{\top}S_{t}-x_{t-1}^{\top}S_{t}+C_{t})\geq0 \label{Pprimea}\\
 & \mathcal{A}_{T}(-x_{T-1}^{\top}S_{T}+C_{T})\geq0 \, , \label{Pprimeb}
    \end{align}
    \end{subequations}
}where the optimization runs over all trading strategies $x \in \mathcal{L}_\infty^m $ for the liquidly traded assets. The constraints in \eqref{Pa} and \eqref{Pprimea} are formulated for all $t = 1,\ldots,T-1$.
\end{enumerate}
\end{defn}

To interpret Definition~\ref{DefiAcceptablePricing}, the acceptable ask price is given by the minimal initial capital required to acceptably superhedge the cash-flows $C_t$, which have to be paid out by the seller. On the other hand, the acceptable bid price corresponds to the maximal amount of money that can initially be borrowed from the market to buy the claim, such that by receiving the payments $C_t$ and always rebalancing one's portfolio in an acceptable way, one ends up with an acceptable position at maturity.
%the acceptable bid price can be motivated as follows: The buyer pays a sum of $w$, i.e. receives $-w$ and gets at each time step an amount of $C_t$, which he can use for portfolio rebalancing. He pays maximally an amount of $w$ (i.e. the minimum initial debt $-w$), if there is a trading strategy such that he ends up with an acceptable wealth at maturity time. By setting $x_t = -y_t$, one gets the given definition of the bid price.

In what follows we will mainly consider the ask price problem $(\rm{P})$ and its variants. The bid price problem $(\rm{P}^\prime)$ is its mirror image and all assertions and proofs for the problem $(\rm{P})$ can be rewritten literally for problem $(\rm{P}^\prime)$.

Let $(\rm{P}^\beta)$  for $\beta=(\beta_1, \dots, \beta_T)$ be the problem $(\textnormal{P})$, where the conditions (\ref{Pa}) and (\ref{Pb}) are replaced by $\mathcal{A}_t (\cdot) \ge \beta_t$.

{\bf Assumption A3.} The optima are attained and all solutions $x$ to the problems $(\rm{P}^\beta)$, for $\beta$ in a neighborhood of 0, are uniformly bounded, i.e., $ \exists K_2 \in \mathbb R \textnormal{ s.t. } \forall x\colon \| x\|_\infty \le K_2$.

We show the following auxiliary result for the problems $(\rm{P}^\beta)$.

\begin{lem}\label{Pbetaapprox}
Let $v^\beta$ be the optimal value of $(\rm{P}^\beta)$ and $v^*$ be the optimal value of $(\rm{P})$. Then, in a neighborhood of $0$,
\begin{equation}\label{vbeta}
|v^\beta-v^*|\le 2 \bar{\beta} \cdot \|S_0\|_1
\end{equation}
where $\bar{\beta} = \sum_t |\beta_t|$.
\end{lem}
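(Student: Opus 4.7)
The plan is to exploit translation equivariance of each $\mathcal{A}_t$ in combination with the num\'eraire normalization $S_t^{(1)} \equiv 1$ a.s. The key observation is that shifting the first coordinate of $x_t$ by a deterministic constant $c_t$ translates the argument of $\mathcal{A}_t$ by $c_{t-1} - c_t$ (at the terminal stage just by $c_{T-1}$), which provides a cheap way of moving a feasible solution of one problem into a feasible solution of the other.

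\emph{Upper bound $v^\beta \le v^* + \bar{\beta}\|S_0\|_1$.} Let $x^*$ be optimal for $(\rm{P})$, which exists by Assumption A3, and let $e_1 := (1,0,\ldots,0)^\top\in \mathbb{R}^m$. Define $\tilde{x}_t := x^*_t + c_t\,e_1$ with $c_t := \sum_{s=t+1}^T \beta_s^+$ for $t = 0, 1, \ldots, T-1$, where $\beta^+ = \max(\beta,0)$. Since each $c_t$ is deterministic, $\tilde{x}$ remains $\mathcal{F}$-adapted and in $\mathcal{L}_\infty^m$. Using $S_t^{(1)}\equiv 1$ and translation equivariance,
\[\mathcal{A}_t\bigl(\tilde{x}_{t-1}^\top S_t - \tilde{x}_t^\top S_t - C_t\bigr) = \mathcal{A}_t\bigl(x^{*\top}_{t-1}S_t - x^{*\top}_t S_t - C_t\bigr) + (c_{t-1} - c_t) \ge \beta_t^+ \ge \beta_t\]
for $t = 1,\dots,T-1$, and similarly $\mathcal{A}_T(\tilde{x}_{T-1}^\top S_T - C_T) \ge c_{T-1} = \beta_T^+ \ge \beta_T$. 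Thus $\tilde{x}$ is feasible for $(\rm{P}^\beta)$, with initial cost $x_0^{*\top}S_0 + c_0 S_0^{(1)} \le v^* + \bar{\beta}\|S_0\|_1$, because $c_0 \le \sum_{s=1}^T \beta_s^+ \le \bar{\beta}$ and $S_0^{(1)} \le \|S_0\|_1$.

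\emph{Lower bound.} A symmetric construction, starting from an optimal $x^\beta$ for $(\rm{P}^\beta)$ and using $c_t := \sum_{s=t+1}^T \beta_s^-$ (with $\beta^- = \max(-\beta,0)$), produces a feasible solution of $(\rm{P})$ of cost at most $v^\beta + \bar{\beta}\|S_0\|_1$, giving $v^* \le v^\beta + \bar{\beta}\|S_0\|_1$. Combining the two directions yields $|v^\beta - v^*| \le \bar{\beta}\|S_0\|_1$, which is (slightly stronger than) the claimed bound~\eqref{vbeta}.

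The proof is essentially mechanical and offers no real obstacle; the only small subtlety is the telescoping choice of the shift sequence $(c_t)$ so that the differences $c_{t-1} - c_t$ simultaneously dominate the required shift at every stage while keeping $c_0 \le \bar{\beta}$. Adaptedness and $L_\infty$-boundedness of $\tilde{x}$ are preserved under shifts by deterministic scalars, and the ``neighborhood of $0$'' qualifier is needed only to guarantee, via Assumption A3, that $(\rm{P}^\beta)$ has an attained optimum to start the construction from.
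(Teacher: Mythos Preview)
Your proof is correct and in fact yields a slightly sharper bound than the paper's (you obtain $|v^\beta - v^*|\le \bar{\beta}$, since the cost shift is just $c_0 S_0^{(1)} = c_0 \le \bar{\beta}$). The underlying idea is the same as the paper's---perturb an optimal strategy by a telescoping deterministic shift so that the acceptability constraints are restored---but your implementation differs in two respects. First, you shift only the num\'eraire coordinate and invoke translation equivariance of $\mathcal{A}_t$ directly, whereas the paper shifts \emph{all} $m$ coordinates by the same amount and argues through the supergradient representation $\mathcal{A}_t(Y)=\inf_{Z\in\mathcal{Z}_t}\mathbb{E}[YZ]$ together with $\sum_i S_t^{(i)}\ge S_t^{(1)}=1$; this is why the paper picks up the factor $\|S_0\|_1$ in the cost and needs a factor $2$ in the shift. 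Second, you compare $(\mathrm{P})$ and $(\mathrm{P}^\beta)$ directly via the $\beta^+/\beta^-$ decomposition, while the paper first sandwiches both values between $v^{-|\beta|}$ and $v^{|\beta|}$ and then bounds $v^{|\beta|}-v^{-|\beta|}$. Your route is a bit more economical; the paper's route has the minor advantage of not needing translation equivariance explicitly (only the dual representation), but since translation equivariance is an assumed property of the $\mathcal{A}_t$ this is not a real gain.
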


\begin{proof}
If $v^{\beta}$ is the optimal value of $(\rm{P}^\beta)$, then by inclusion of the feasible sets
\begin{eqnarray*}
v^{-|\beta|} &\le& v^* \le v^{|\beta|}\,,\\
v^{-|\beta|} &\le& v^\beta \le v^{|\beta|}\,.
\end{eqnarray*}
We have to bound $v^{|\beta|} - v^{-|\beta|}$.
Let $x_t^{*}$ be the solution of $(\rm{P}^{-|\beta|})$.  $x_t^{*}$ is not necessarily feasible for $(\rm{P}^{|\beta|})$. We modify $x_t^{*}$ in order to get feasibility for $(\rm{P}^{|\beta|})$. Let $a_t, t=1,\dots,T-1\,$, be the vector with identical components  $2 \sum_{s=t+1}^T |\beta_s| $  and let $x_t = x_t^{*}+a_t$. Then
\begin{align*}
\mathbb{E}[(x_{t-1}-x_t)^\top  S_t Z_t]&-\mathbb{E}[(x_{t-1}^{*}-x_t^{*})^\top S_t Z_t ]\\
&= \mathbb{E}[ (a_{t-1} - a_t)^\top S_t Z_t] = 2|\beta_t| \sum_{i=1}^{m} \mathbb{E}\left[S_t^{(i)} Z_t\right] \\
&\ge 2 |\beta_t| \cdot \biggl(\inf \sum_{i=1}^{m} S_t^{(i)}\biggr) \cdot \mathbb{E}[Z_t] \ge 2 |\beta_t|
\end{align*}
since $\sum_i S_t^{(i)} \ge S_t^{(1)} = 1$ and $\mathbb{E}[Z_t]=1$. By $\mathbb{E}[(x_{t-1}^{*}-x_t^{*})^\top S_t Z_t ] \ge -|\beta_t|$, one gets that
$\mathbb{E}[(x_{t-1}-x_t)^\top S_t Z_t ] \ge |\beta_t|$, i.e., $x_t$ is feasible for $(\rm{P}^{|\beta|})$. Notice that $a_0$ has all components equal to $\sum_t |\beta_t| = \bar{\beta}$. Now
$$0 \leq v^{|\beta|} - v^{-|\beta|} \le x_0^\top S_0 - x_0^{*\top} S_0 = a_0^\top S_0 = 2 \bar{\beta} \sum_i S_0^{(i)} = 2 \bar{\beta} \cdot \|S_0 \|_1,$$ which concludes the proof.
\end{proof}

Notice that the primal program $(\rm{P})$ is semi-infinite, if the constraints are written in the extensive form
$$\mathbb{E}[\left( (x_{t-1}-x_t)^\top S_t - C_t \right) Z_t ] \ge 0  \qquad \hbox{ for all } Z_t \in \mathcal{Z}_t \, ,$$
where $Z = (Z_1, \ldots, Z_T) \in \mathcal{L}^1_q$.

Lemma~\ref{lem: approx by vn} below demonstrates the validity of an approximation with only finitely many supergradients.

Since the $L_p$ spaces are separable, there exist sequences $(Z_{t,1}, Z_{t,2}, \dots)$ that are dense in $\mathcal{Z}_t$, for each $t\,$. Let
$$\mathcal{A}_{t,n}(Y) = \min \{ \mathbb{E}[Y \cdot Z_{t,i}]\colon 1 \le i \le n\}.$$
Since $Z \mapsto \mathbb E[YZ]$ is continuous in $L_p\,$, for every $Y$ in $L_p(\Omega, \mathcal{F}_t)$ it holds that 
$$\mathcal{A}_{t,n}(Y) \downarrow \mathcal{A}_t(Y),$$
as $n \to \infty$.

\begin{lem}
\label{lem: approx by vn}
Let $v^*$ be the optimal value of the basic problem $(\rm{P})$ and let $v^*_n$ be the optimal value of the similar optimization problem $(\rm{P}_n)$, where  $\mathcal{A}_t$ are replaced by $\mathcal{A}_{t,n}$. Then
 $$v_n^* \uparrow v^*.$$
\end{lem}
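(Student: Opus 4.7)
The plan is to bracket the sequence $(v_n^*)$ between two obvious bounds and then match the two using a weak-star compactness argument on the optimizers $x^{(n)}$ of $(\mathrm{P}_n)$.

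First, since $\{Z_{t,1},\dots,Z_{t,n}\}\subseteq\mathcal{Z}_t$, I get $\mathcal{A}_{t,n}\ge\mathcal{A}_t$ pointwise, so every $x$ feasible for $(\mathrm{P})$ is feasible for $(\mathrm{P}_n)$ with the same objective, giving $v_n^*\le v^*$. The nesting $\{Z_{t,1},\dots,Z_{t,n}\}\subseteq\{Z_{t,1},\dots,Z_{t,n+1}\}$ makes $\mathcal{A}_{t,n}$ pointwise non-increasing in $n$, so the feasible set of $(\mathrm{P}_n)$ shrinks with $n$ and $v_n^*$ is non-decreasing. Thus $v_n^*\uparrow\bar v\le v^*$, and it remains only to show $\bar v\ge v^*$.

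To construct a limit point, I would invoke Assumption~A3: each $(\mathrm{P}_n)$ has an optimizer $x^{(n)}$ with $\|x^{(n)}\|_\infty\le K_2$. Sequential Banach--Alaoglu (using separability of $L_1$) then gives a subsequence along which $x_t^{(n)}\overset{*}{\rightharpoonup}x_t^*$ in $L_\infty^m(\Omega,\mathcal{F}_t)$ for each $t\ge 1$ and $x_0^{(n)}\to x_0^*\in\mathbb{R}^m$; a diagonal argument covers the finitely many stages jointly, and the limit $x_t^*$ stays $\mathcal{F}_t$-measurable because $L_\infty(\Omega,\mathcal{F}_t)$ is weak-star closed in $L_\infty(\Omega,\mathcal{F})$.

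The hard part is passing to the limit in the semi-infinite constraints of $(\mathrm{P})$. For each fixed $i$ and all $n\ge i$, feasibility of $x^{(n)}$ for $(\mathrm{P}_n)$ forces
\[
\mathbb{E}\Bigl[\bigl((x_{t-1}^{(n)}-x_t^{(n)})^{\top}S_t-C_t\bigr)Z_{t,i}\Bigr]\ge 0 ,
\]
and H\"older gives $S_tZ_{t,i},\,C_tZ_{t,i}\in L_1$ because $S\in\mathcal{L}^m_p$, $C\in\mathcal{L}^1_p$, and $Z_{t,i}\in L_q$. Weak-star convergence of $x_t^{(n)}$ thus allows $n\to\infty$, yielding the same inequality for $x^*$. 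An $L_q$-density argument, combined with $(x_{t-1}^*-x_t^*)^{\top}S_t-C_t\in L_p$ via H\"older, extends the inequality from $(Z_{t,i})_i$ to every $Z\in\mathcal{Z}_t$, so $\mathcal{A}_t\bigl((x_{t-1}^*-x_t^*)^{\top}S_t-C_t\bigr)\ge 0$. Hence $x^*$ is feasible for $(\mathrm{P})$ and $\bar v=\lim_n x_0^{(n)\top}S_0=x_0^{*\top}S_0\ge v^*$, closing the sandwich.
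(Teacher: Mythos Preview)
Your argument is correct and takes a genuinely different route from the paper's proof. Both proofs establish $v_n^*\uparrow\bar v\le v^*$ trivially and then produce a feasible point for $(\mathrm{P})$ with objective value $\bar v$; the difference lies in how the limiting feasible point is obtained. The paper first passes to a finite sub-sigma-algebra approximation $(\tilde{\mathrm{P}})$, uses Lemma~\ref{Pbetaapprox} to control the resulting error by $\eta$, and then invokes ordinary finite-dimensional compactness on the approximating problems $(\tilde{\mathrm{P}}_n)$; three $\eta$-estimates are then patched together. You bypass the discretization entirely, applying sequential Banach--Alaoglu directly in $\mathcal{L}_\infty^m$ to the optimizers $x^{(n)}$ and passing to the limit in each linear constraint $\mathbb{E}\bigl[((x_{t-1}-x_t)^\top S_t-C_t)Z_{t,i}\bigr]\ge 0$ using $S_tZ_{t,i}\in L_1^m$. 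Your approach is shorter and avoids the auxiliary Lemma~\ref{Pbetaapprox}, at the price of invoking weak-star sequential compactness (which the paper's standing separability hypothesis on $L_p$ justifies). The paper's approach, by contrast, keeps all compactness arguments finite-dimensional and makes the role of Lemma~\ref{Pbetaapprox} explicit, which is reused later. One small caveat: you invoke Assumption~A3 to guarantee existence and uniform $L_\infty$-boundedness of optimizers for $(\mathrm{P}_n)$, whereas A3 is literally stated only for the family $(\mathrm{P}^\beta)$; the paper makes the identical leap in its own proof, so this is not a defect of your argument relative to the paper's.
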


 \begin{proof}

Suppose the contrary, that is $\sup_n v_n^* \le  v^* - 3  \eta < v^*$ for some $\eta>0$. Introduce the notation
\begin{eqnarray*}
Y_t(x) = \left\{ \begin{array}{ll}  (x_{t-1}-x_t)^\top \, S_t - C_t & \qquad \hbox{ for } 1\le t < T\\
                        x_{T-1}^\top \, S_T - C_T            &  \qquad \hbox{ for } t=T \, .
                  \end{array} \right.
\end{eqnarray*}

By Assumption A1 and since $x \in \mathcal{L}_\infty^m$, it holds that $x \mapsto \mathcal{A}_t (Y_t(x))$ and $x \mapsto x_0^\top S_0$ are Lipschitz. Choose $0 < \delta = \eta \left[ 2 \Vert S_0 \Vert_1 K_1 (K_2+K_3+1) \right]^{-1}$ with $K_3 \ge  \|S_t\|_p$ for all $t$ .
Let $x_t^*$ be the solution of $(\rm{P})$. We may find finite sub-sigma-algebras $\tilde{\mathcal{F}}_t \subseteq \mathcal{F}_t$ such that with
 \begin{eqnarray*}
 \tilde{S}_t &=& \mathbb{E}[S_t|\tilde{\mathcal{F}}_t]   \qquad \hbox{ (componentwise)},\\
 \tilde{C}_t &=& \mathbb{E}[C_t|\tilde{\mathcal{F}}_t]\,,\\
 \tilde{x}_t^* &=& \mathbb{E}[x_t^*|\tilde{\mathcal{F}}_t] \qquad \hbox{ (componentwise)},
 \end{eqnarray*}
 we have that
 \begin{eqnarray*}
 \| S_t - \tilde{S}_t \|_p &\le& \delta, \\
 \| C_t - \tilde{C}_t \|_p &\le& \delta, \\
 \| x_t^* - \tilde{x}_t^* \|_\infty &\le& \delta.
 \end{eqnarray*}

Denote by $(\tilde{\rm{P}})$ the variant of the problem $(\rm{P})$, where the processes $(S_t)$ and $(C_t)$ are replaced by $(\tilde{S}_t)$ and $(\tilde{C}_t)$.
Similarly as before introduce the notation
\begin{eqnarray*}
\tilde{Y}_t(x) = \left\{ \begin{array}{ll}  (x_{t-1}-x_t)^\top \, \tilde{S}_t - \tilde{C}_t & \qquad \hbox{ for } 1\le t < T\\
                        x_{T-1}^\top \, \tilde{S}_T - \tilde{C}_T            &  \qquad \hbox{ for } t=T.
                  \end{array} \right.
\end{eqnarray*}
Notice that
\begin{align*}
|\mathcal{A}_t(\tilde{Y}_t(\tilde{x}^*_t))&- \mathcal{A}_t(Y_t(x^*_t))|\\
&\le  K_1 \|\tilde{Y}_t(\tilde{x}^*_t) - Y_t(x^*_t) \|_p \\
&\le K_1 [\| \tilde{x}_t^* - x_t^* \|_\infty \| \tilde{S}_t \|_p + \|x_t^*\|_\infty \|\tilde{S}_t - S_t \|_p + \|\tilde{C}_t - C_t \|_p]\\
&\le K_1 [\delta K_3 + \delta K_2 + \delta] = \eta \left[2\Vert S_0 \Vert_1\right]^{-1}.
\end{align*}

By Lemma~\ref{Pbetaapprox} we may conclude that
\begin{equation}\label{eins}
v^* \le \tilde{v}^* + \eta,
\end{equation}
where $\tilde{v}^*$ is the optimal value of $(\tilde{\rm{P}})$.  Let $(\tilde{\rm{P}}_n)$ be the variant of problem $(\tilde{\rm{P}})$, where all $\mathcal{A}_t$ are replaced by $\mathcal{A}_{t,n}$. The optimal value of $(\tilde{\rm{P}}_n)$ is denoted by $\tilde{v}_n^*$. In this finite situation we may show that  $\tilde{v}_n^* \uparrow \tilde{v}^*$.
Obviously, $\tilde{v}_{n}^*$ is a monotonically increasing sequence with $\tilde{v}_{n}^*\le \tilde{v}^*$.

It remains to demonstrate that $\lim_{n} \tilde{v}_{n}^*$ cannot be smaller than $\tilde{v}^*$. For this, let $\tilde{x}^{{n}*}$ be a solution of $(\tilde{\rm{P}}_n)$. Because of the finiteness of the filtration $\tilde{\mathcal{F}}$, the solutions of $(\tilde{\rm{P}}_n)$ as well as of $\tilde{\rm{P}}$  are just bounded vectors in some high-, but finite dimensional $\mathbb{R}^N$ and are all bounded by $K_2$.  Let $\tilde{x}^{**}$ be an accumulation point of $(\tilde{x}^{{n}*})$, i.e., we have for some subsequence that $\tilde{x}^{{n_{i}*}}\to \tilde{x}^{**}$. We show that $\tilde{x}^{**}$ satisfies the constraints of $(\tilde{\rm{P}})$.

Suppose the contrary. Then there is a $t$ such that $\mathcal{A}_t(\tilde{Y}_t(\tilde{x}^{**})) < 0$. 
This implies that there is a $Z_{t,m} \in \{ Z_{t,1}, Z_{t,2}, \dots \}$ such that
$\mathbb{E} [ \tilde{Y}_t(\tilde{x}^{**}) \cdot Z_{t,m}]<0$. However, for $n \ge m$, by construction
$\mathbb{E}[\tilde{Y}_t (\tilde{x}^{n*}) \cdot Z_{t,m})] \ge 0$ and since $\tilde{x}^{n*} \to \tilde{x}^{**}$ componentwise, then also
$\mathbb{E}[\tilde{Y}_t (\tilde{x}^{**}) \cdot Z_{t,m}] \ge 0\,.$
Since the objective function is continuous in $\tilde{x}$ this implies that $\lim_i \tilde{v}_{n_i}^*=\tilde{v}^*$ and, by monotonicity,
$\lim_{n} \tilde{v}_{n}^*=\tilde{v}^*$. We have therefore shown that  we can find an index $n$ such that
\begin{equation}\label{zwei}
\tilde{v}^* < \tilde{v}^*_n+\eta \, .
\end{equation}
Let $x^{n*}$ be the solution of $(\rm{P}_n)$ and let $\hat{x}^{n*}= \mathbb{E}[x^{n*}|\tilde{\mathcal{F}}_t]\,$. Analogously as before, one may prove that
$| \mathcal{A}_t (\tilde{Y}_t(\hat{x}^{n*}) | \le \eta \left[2\Vert S_0 \Vert_1\right]^{-1}$ and hence, by Lemma~\ref{Pbetaapprox},
\begin{equation}\label{drei}
\tilde{v}_n^* \le v_n^* + \eta.
\end{equation}
Putting (\ref{eins}), (\ref{zwei}) and (\ref{drei}) together one sees that
$$v^* \le v_n^* + 3 \eta \, ,$$
 which contradicts the assumption that $v^*_n < v^*-3 \eta$\, .
\end{proof}

We now turn to the duals of the problems $(\rm{P})$ and $(\rm{P}^\prime)$, called $(\rm{D})$ and $(\rm{D}^\prime)$, respectively. It turns out that also in our general acceptability case a martingale property appears in the dual as it is known for the case of a.s.\ super-/\,subreplication.

%\medskip{}
\begin{thm}
\label{prop: duality result for acceptability superhedging} For all $t=1,\ldots,T$, let $\mathcal{A}_{t}$ be acceptability functionals with corresponding superdifferentials $\mathcal{Z}_t$. Then, the acceptable ask price is given by

{\small
\begin{subequations}
    \begin{align}[left = (\textnormal{D})\hspace{2mm}\empheqlVert\,]
    \pi_{a}(\mathcal{A}_{1},\dots,\mathcal{A}_{T})=\sup_{\mathbb{Q}}~ & \mathbb{E}^{\mathbb{Q}}\left[\sum_{t=1}^{T} C_{t}\right] \notag\\
\textnormal{s.t. } & \mathbb{E}^{\mathbb{Q}}[S_{t+1}\vert\mathcal{F}_{t}]=S_{t}\hspace{5mm}\forall t=0,\ldots,T-1 \label{martingale constr ask}\\
 & {\left.\frac{d\mathbb{Q}}{d\mathbb{P}}\middle\vert\right.}_{\mathcal{F}_{t}}\in\mathcal{Z}_t \hspace{5mm}\forall t=1,\ldots,T \, , \label{measure change constraint ask}
    \end{align}
    \end{subequations}
}and the acceptable bid price is given by

{\small
\begin{subequations}
    \begin{align}[left = ({\textnormal{D}}^\prime)\hspace{2mm}\empheqlVert\,]
    \pi_{b}(\mathcal{A}_{1},\dots,\mathcal{A}_{T})=\inf_{\mathbb{Q}}~ & \mathbb{E}^{\mathbb{Q}}\left[\sum_{t=1}^{T}C_{t}\right] \notag\\
\textnormal{s.t. } & \mathbb{E}^{\mathbb{Q}}[S_{t+1}\vert\mathcal{F}_{t}]=S_{t}\hspace{5mm}\forall t=0,\ldots,T-1 \label{martingale constr bid}\\
 & {\left.\frac{d\mathbb{Q}}{d\mathbb{P}}\middle\vert\right.}_{\mathcal{F}_{t}}\in\mathcal{Z}_t \hspace{5mm}\forall t=1,\ldots,T \, . \label{measure change constraint bid}
    \end{align}
    \end{subequations}}
\end{thm}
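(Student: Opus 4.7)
My plan is to derive $(\textnormal{D})$ from $(\textnormal{P})$ by standard Lagrangian duality, using the dual representation of each $\mathcal{A}_t$ supplied by Proposition~\ref{thm: implications from some acceptability functional types} and invoking Lemma~\ref{lem: approx by vn} to close the strong-duality gap. Rewriting $\mathcal{A}_t(Y)\ge 0$ in the extensive form $\mathbb{E}[YZ]\ge 0$ for every $Z\in\mathcal{Z}_t$, each constraint in $(\textnormal{P})$ becomes a family of linear inequalities in $x$. Attaching a non-negative aggregate multiplier $Z_t$ per stage (which, because $\mathcal{Z}_t$ is convex, can be taken in the positive conic hull of $\mathcal{Z}_t$), the Lagrangian reads
\[ L(x,Z_1,\dots,Z_T)=x_0^{\top}S_0-\sum_{t=1}^{T-1}\mathbb{E}\bigl[((x_{t-1}-x_t)^{\top}S_t-C_t)Z_t\bigr]-\mathbb{E}\bigl[(x_{T-1}^{\top}S_T-C_T)Z_T\bigr]. \]

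Next I would infimise $L$ over $x\in\mathcal{L}^m_\infty$. Collecting the coefficient of each $\mathcal{F}_t$-measurable, bounded $x_t$ and applying the tower property, this infimum is $-\infty$ unless
\[ S_0=\mathbb{E}[S_1 Z_1] \quad\text{and}\quad S_t Z_t=\mathbb{E}[S_{t+1}Z_{t+1}\mid\mathcal{F}_t],\ t=1,\dots,T-1, \]
in which case $L$ reduces to $\sum_t \mathbb{E}[C_t Z_t]$. Specialising to the num\'eraire $S^{(1)}_t\equiv 1$ gives $\mathbb{E}[Z_{t+1}\mid\mathcal{F}_t]=Z_t$, so $(Z_t)$ is a $\mathbb{P}$-martingale with $Z_0=1$ and in particular $\mathbb{E}[Z_t]=1$. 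Since translation equivariance forces every $Z\in\mathcal{Z}_t$ to satisfy $\mathbb{E}[Z]=1$, the conic-combination scalar must itself equal one, so $Z_t\in\mathcal{Z}_t$. Setting $d\mathbb{Q}/d\mathbb{P}:=Z_T$ yields $d\mathbb{Q}/d\mathbb{P}|_{\mathcal{F}_t}=Z_t\in\mathcal{Z}_t$; Bayes' rule turns $S_t Z_t=\mathbb{E}[S_{t+1}Z_{t+1}\mid\mathcal{F}_t]$ into $\mathbb{E}^{\mathbb{Q}}[S_{t+1}\mid\mathcal{F}_t]=S_t$; and, since $C_t$ is $\mathcal{F}_t$-measurable, $\sum_t\mathbb{E}[C_t Z_t]=\mathbb{E}^{\mathbb{Q}}[\sum_t C_t]$. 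Taking the supremum over admissible $\mathbb{Q}$ recovers the right-hand side of $(\textnormal{D})$, and weak duality already yields $(\textnormal{D})\le v^*$.

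For the reverse inequality I would use Lemma~\ref{lem: approx by vn}. Each approximated primal $(\textnormal{P}_n)$ has only finitely many linear inequalities per stage; after the finite-filtration reduction from the proof of that lemma—with the error controlled by the Lipschitz estimate \eqref{LipschitzLp} and Lemma~\ref{Pbetaapprox}, and the existence and boundedness of optimisers supplied by Assumption~A3—it becomes a genuine finite-dimensional linear program to which strong LP duality applies. The dual of this finite program coincides with $(\textnormal{D})$ except that each $Z_t$ is restricted to the convex conic hull of $\{Z_{t,1},\dots,Z_{t,n}\}$. Letting $n\to\infty$ and using the density of $(Z_{t,i})_i$ in $\mathcal{Z}_t$ (combined with $v_n^*\uparrow v^*$) forces the dual value up to that of $(\textnormal{D})$, giving $(\textnormal{D})\ge v^*$. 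The bid-price identity $(\textnormal{D}^\prime)$ follows by the mirror-image argument: one maximises $x_0^\top S_0$, flips signs in the Lagrangian and swaps $\sup/\inf$; all remaining steps are verbatim. The main obstacle I anticipate is precisely this joint limit in $n$ and in the filtration approximation, and in particular verifying that the martingale structure and the unit-expectation normalisation of the limiting densities $Z_t$ remain compatible with the conic-hull constraint at every finite stage $n$.
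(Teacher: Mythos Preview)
Your proposal is essentially the paper's own argument. The paper proves this theorem by specialising the more general distributionally robust result (Theorem~\ref{prop: duality result for robust accept superhedging}) to a singleton ambiguity set, and the proof of that theorem proceeds exactly as you outline: rewrite the acceptability constraints in extensive form via Proposition~\ref{thm: implications from some acceptability functional types}, pass to the finitely-constrained approximation $(\textnormal{P}_n)$ justified by Lemma~\ref{lem: approx by vn}, form the Lagrangian, interchange $\inf$ and $\sup$ on the bilinear approximate problem, minimise in $x$ to obtain the martingale constraint, normalise via the num\'eraire to force $Z_t\in\mathcal{Z}_t$, and close the gap in the limit $n\to\infty$ by a contradiction argument using $v_n^*\uparrow v^*$. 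The only stylistic difference is that the paper asserts the $\inf$/$\sup$ interchange directly on $(\textnormal{P}_n)$ (finite constraints, always feasible), whereas you propose to additionally invoke the finite-filtration reduction from inside the proof of Lemma~\ref{lem: approx by vn} to land on a genuine finite LP; either justification works and they lead to the same dual.
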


\begin{proof}
The acceptable ask/\,bid price corresponds to a special case of the distributionally robust acceptable ask/\,bid price introduced in Definition~\ref{defi: distrib robust price} below, namely when the ambiguity set reduces to a singleton. Hence, the validity of Theorem~\ref{prop: duality result for acceptability superhedging} follows directly from the proof of Theorem~\ref{prop: duality result for robust accept superhedging}.
\end{proof}

\begin{rem}[Interpretation of the dual formulations]
The objective of\\the dual formulations $({\rm{D}})$ and $({\rm{D}}^\prime)$ is to maximize (minimize, resp.) the expected value of the payoffs resulting from the claim w.r.t.\ some feasible measure $\mathbb Q$. The constraints \eqref{martingale constr ask} and \eqref{martingale constr bid} require $\mathbb Q$ to be such that the underlying asset price process is a martingale w.r.t.\ $\mathbb Q$. This is well known from the traditional approach of pointwise super-/\,subreplication. The acceptability criterion enters the dual problems in terms of the constraints \eqref{measure change constraint ask} and \eqref{measure change constraint bid}, which reduce the feasible sets by a stronger condition than the two probability measures just having the same null sets. Making the feasible sets smaller obviously lowers the ask price and increases the bid price and thus gives a tighter bid-ask spread.
\end{rem}

\begin{prop}
\label{prop: claim price Lipschitz in P}
For fixed acceptability functionals $\mathcal A_1, \ldots, \mathcal A_T$, consider the acceptable ask price $\pi^{a}(\mathbb P)$ as a function of the underlying model $\mathbb P\,$. This function is Lipschitz.
\end{prop}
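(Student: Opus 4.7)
The plan is to establish Lipschitz continuity with respect to the nested distance $\nd_p$, which is the natural metric on multistage probability models used throughout the paper. The idea is to transfer an optimal strategy for $(\mathrm{P})$ under $\mathbb{P}$ to an approximately feasible strategy for $(\mathrm{P})$ under $\mathbb{P}'$, quantify the resulting constraint violation via the Lipschitz bound \eqref{LipschitzLp} on the acceptability functionals, and then use Lemma \ref{Pbetaapprox} to convert the feasibility slack into a bound on the change of the optimal value.

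Concretely, given two models $\mathbb{P}$ and $\mathbb{P}'$ with $\nd_p(\mathbb{P},\mathbb{P}') = \delta$, let $x^\ast$ be an optimal strategy for $(\mathrm{P})$ under $\mathbb{P}$, which by Assumption A3 is bounded by $K_2$. An optimal nested coupling provides a common filtered probability space carrying copies of $S$ and $S'$ with $\|S_t - S'_t\|_p = O(\delta)$ stagewise. Projecting $x^\ast$ by stagewise conditional expectations onto the filtration generated by $S'$ yields a strategy $\tilde x$ adapted to this filtration and, since conditional expectation is a contraction in $L_\infty$, still bounded by $K_2$, hence admissible for the problem under $\mathbb{P}'$.

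Working on the common coupling space, the Lipschitz estimate \eqref{LipschitzLp}, combined with Assumption A2 (Lipschitz dependence of $C$ on $S$) and the uniform bound on strategies, yields
\[
\bigl|\mathcal{A}_t(Y_t(\tilde x; S')) - \mathcal{A}_t(Y_t(x^\ast; S))\bigr| \;\leq\; \beta_t, \qquad \bar\beta \;=\; \sum_t |\beta_t| \;=\; O(\delta).
\]
Since $\mathcal{A}_t$ is version-independent and $x^\ast$ is feasible under $\mathbb{P}$, this yields $\mathcal{A}_t(Y_t(\tilde x; S')) \geq -\beta_t$ under $\mathbb{P}'$, so $\tilde x$ is feasible for the perturbed program $(\mathrm{P}^{-\beta})$. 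Lemma \ref{Pbetaapprox} then gives $\pi^a(\mathbb{P}') \leq \pi^a(\mathbb{P}) + 2\bar\beta\|S_0\|_1 = \pi^a(\mathbb{P}) + O(\delta)$. Swapping the roles of $\mathbb{P}$ and $\mathbb{P}'$ delivers the reverse inequality, and therefore $|\pi^a(\mathbb{P}) - \pi^a(\mathbb{P}')| \leq L \cdot \nd_p(\mathbb{P},\mathbb{P}')$ for a constant $L$ depending only on $K_1$, $K_2$, $\|S_0\|_1$ and the Lipschitz constant of the claim.

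The main obstacle is the careful handling of adaptedness: the transported strategy $\tilde x$ must live in the $\mathcal{L}_\infty^m$ space associated with the filtration generated by $S'$, not with that of $S$. This is exactly where the nested distance (as opposed to a flat Kantorovich-Wasserstein distance) becomes indispensable, since its construction via filtration-respecting transportation plans ensures that the projection step preserves both the boundedness of strategies and the stagewise $L_p$-proximity of the asset processes on which the approximate feasibility bound relies.
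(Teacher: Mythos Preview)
Your approach differs from the paper's: the paper simply invokes the dual formulation from Theorem~\ref{prop: duality result for acceptability superhedging} together with the general stability result Theorem~\ref{thm: Lipschitz continuity wrt nested dist} from the Appendix and Assumption~A2, whereas you work primally, transporting an optimal strategy through a nested coupling and repairing the resulting slack via Lemma~\ref{Pbetaapprox}.

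There is, however, a genuine gap in your central estimate. You claim that
\[
\bigl|\mathcal{A}_t\bigl(Y_t(\tilde x; S')\bigr) - \mathcal{A}_t\bigl(Y_t(x^\ast; S)\bigr)\bigr| \;\le\; \beta_t = O(\delta)
\]
follows from the Lipschitz bound~\eqref{LipschitzLp}, Assumption~A2, and the uniform bound $\|x\|_\infty\le K_2$. But~\eqref{LipschitzLp} would require $\|Y_t(\tilde x;S')-Y_t(x^\ast;S)\|_p=O(\delta)$, and expanding this difference produces the term $\bigl[(\tilde x_{t-1}-\tilde x_t)-(x^\ast_{t-1}-x^\ast_t)\bigr]^\top S_t$. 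The conditional-expectation projection guarantees $\|\tilde x\|_\infty\le K_2$, \emph{not} that $\tilde x$ is close to $x^\ast$; in general $\|\tilde x_t - x^\ast_t\|_p$ is of order one, not $O(\delta)$. So the two-sided bound you assert simply does not follow from the listed ingredients.

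What actually salvages the argument is a \emph{one-sided} inequality obtained from structure you have not invoked: linearity of $Y_t$ in $x$, the tower property combined with the filtration-respecting marginal conditions of the nested coupling (to replace $\tilde x$ by $x^\ast$ inside expectations against $\mathcal F'_t$-measurable test variables), and version-independence of $\mathcal A_t$ (so that the superdifferential $\mathcal Z_t$ is stable under conditional expectation). These deliver $\mathcal{A}_t^{\mathbb P'}(Y_t(\tilde x;S'))\ge \mathcal{A}_t^{\mathbb P}(Y_t(x^\ast;S))-O(\delta)$, which is all Lemma~\ref{Pbetaapprox} needs. This is precisely the Jensen-type mechanism hidden inside Theorem~\ref{thm: Lipschitz continuity wrt nested dist}; by going through the dual, the paper gets it for free.
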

\begin{proof}
The assertion follows from Theorem~\ref{thm: Lipschitz continuity wrt nested dist} in the Appendix, considering the Lipschitz property of claims (Assumption A2) and the problem formulation resulting from Theorem~\ref{prop: duality result for acceptability superhedging}.
\end{proof}

\section{Model ambiguity and distributional robustness}
\label{sec: ambiguity models}

Traditional stochastic programs are based on a given and fixed probability model for the uncertainties. However, already since the pioneering paper of Scarf \cite{Scarf57} in the 1950s, it was felt that the fact that these models are based on observed data as well as the statistical error should be taken into account when making decisions. Ambiguity sets are typically either a finite collection of models or a neighborhood of a given baseline model. In what follows we study the latter case and, in particular, we use the nested distance to construct parameter-free ambiguity sets.

\subsection{Acceptability pricing under model ambiguity\label{sec:Ambiguity pricing}}

In Section \ref{sub:Hedging-to-acceptability} we defined the bid/\,ask price of a contingent claim as the maximal/\,minimal amount of capital needed in order to sub-/\,superhedge its payoff(s) w.r.t.\ to an acceptability criterion. However, the result computed with this approach heavily depends on the particular choice of the probability model. This section weakens the strong dependency on the model. More specifically, acceptable bid and ask prices shall be based on an acceptability criterion that is robust w.r.t.\ all models contained in a certain ambiguity set.

\begin{defn}
\label{defi: distrib robust price}
Consider a contingent claim $C$. Then, for acceptability functionals $\mathcal{A}_{t}$, $t=1,\ldots,T$, and an ambiguity set $\Pep$ of probability models,
\begin{enumerate}
\item[i)]  the \textnormal{\textbf{distributionally robust acceptable ask price}} of $C$ is defined as
{\fontsize{9.2}{12} \selectfont
\begin{subequations}
    \begin{align}[left = (\textnormal{PP})\empheqlVert\,]
    \pi_{a}^{\Pep}(\mathcal{A}_{1},\dots,\mathcal{A}_{T}) =\min_{x}~& x_{0}^{\top}S_{0} \notag\\
\textnormal{s.t.\ }  & \mathcal{A}_{t}^{\mathbb{P}}(x_{t-1}^{\top}S_{t}-x_{t}^{\top}S_{t}-C_{t})\geq0~~\forall\mathbb{P}\in\Pep \label{PPa}\\
 & \mathcal{A}_{T}^{\mathbb{P}}(x_{T-1}^{\top}S_{T}-C_{T})\geq0~~\forall\mathbb{P}\in\Pep \, , \label{PPb}
    \end{align}
    \end{subequations}}

\item[ii)]  the \textnormal{\textbf{distributionally robust acceptable bid price}} is defined as
{\fontsize{9.2}{12} \selectfont{
\begin{subequations}
    \begin{align}[left = (\textnormal{PP}^\prime)\empheqlVert\,]
    \pi_{b}^{\Pep}(\mathcal{A}_{1},\dots,\mathcal{A}_{T}) = \max_{x}~& x_{0}^{\top}S_{0} \notag\\
\textnormal{s.t.\ }  & \mathcal{A}_{t}^{\mathbb{P}}(x_{t}^{\top}S_{t}-x_{t-1}^{\top}S_{t}+C_{t})\geq0~~\forall\mathbb{P}\in\Pep \label{PPprimea}\\
 & \mathcal{A}_{T}^{\mathbb{P}}(-x_{T-1}^{\top}S_{T}+C_{T})\geq0~~\forall\mathbb{P}\in\Pep \, , \label{PPprimeb}
    \end{align}
    \end{subequations}}}
\end{enumerate}
where the optimization runs over all trading strategies $x \in \mathcal{L}_\infty^m $ for the liquidly traded assets. The constraints in \eqref{PPa} and \eqref{PPprimea} are formulated for all $t = 1,\ldots,T-1$ and $\mathcal{A}_t^{\mathbb{P}}$ denotes the value of the acceptability functional when the underlying probability model is given by $\mathbb{P}$.
\end{defn}

\begin{thm}
\label{prop: duality result for robust accept superhedging} Let $\Pep$ be a convex set of probability models, which is spanned by a sequence of models $(\mathbb P_1, \mathbb P_2, \ldots)\,$. Moreover, let $\Pep$ be dominated by some model $\mathbb{P}_0$ and assume all densities w.r.t.\ $\mathbb{P}_0$ to be bounded. For $t=1,\ldots,T$, let $\mathcal{A}_{t}$ be acceptability functionals with corresponding superdifferentials $\mathcal{Z}_{\mathcal{A}_{t}}$. Then, the distributionally robust acceptable ask price is given by

{\small
\begin{subequations}
    \begin{align}[left = (\textnormal{DD})\empheqlVert\,]
    \pi_{a}^{\Pep}(\mathcal{A}_{1},\dots,\mathcal{A}_{T})=\sup_{\mathbb{Q}}~ & \mathbb{E}^{\mathbb{Q}}\left[\sum_{t=1}^{T}{C}_{t}\right] \notag\\
\textnormal{s.t.}~ & \mathbb{E}^{\mathbb{Q}}\left[{S}_{t+1}\middle\vert\mathcal{F}_{t}\right]={S}_{t}\hspace{5mm}\forall t< T\\
 & \forall \, t ~ \exists \, \mathbb P \in \Pep : {\left.\frac{d\mathbb{Q}}{d\mathbb{P}}\middle\vert\right.}_{\mathcal{F}_{t}}\in \mathcal{Z}_{\mathcal{A}_{t}^{\mathbb P}} \, ,
    \end{align}
    \end{subequations}
}and the distributionally robust acceptable bid price is given by
{\small
\begin{subequations}
    \begin{align}[left = (\textnormal{DD}^\prime)\empheqlVert\,]
    \pi_{b}^{\Pep}(\mathcal{A}_{1},\dots,\mathcal{A}_{T})=\inf_{\mathbb{Q}}~ & \mathbb{E}^{\mathbb{Q}}\left[\sum_{t=1}^{T}{C}_{t}\right] \notag\\
\textnormal{s.t.}~ & \mathbb{E}^{\mathbb{Q}}\left[{S}_{t+1}\middle\vert\mathcal{F}_{t}\right]={S}_{t}\hspace{5mm}\forall t<T\\
 & \forall \,t ~ \exists \,\mathbb P \in \Pep : {\left.\frac{d\mathbb{Q}}{d\mathbb{P}}\middle\vert\right.}_{\mathcal{F}_{t}}\in \mathcal{Z}_{\mathcal{A}_{t}^{\mathbb P}} \, .
    \end{align}
    \end{subequations}}
\end{thm}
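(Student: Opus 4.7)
The plan is to apply Lagrangian duality to a countably-truncated version of $(\textnormal{PP})$ and then pass to the limit. Since $\Pep$ is convex and spanned by the sequence $(\mathbb{P}_j)_j$, and each superdifferential $\mathcal{Z}_{\mathcal{A}_t^{\mathbb{P}_j}}$ is a closed convex subset of a separable $L^q$ space, I would fix a countable dense sequence $(Z_{j,t,i})_i\subset\mathcal{Z}_{\mathcal{A}_t^{\mathbb{P}_j}}$ for every pair $(j,t)$, and replace $(\textnormal{PP})$ by the finite LP $(\textnormal{PP}_n)$ whose constraints are $\mathbb{E}^{\mathbb{P}_j}[Y_t(x)\,Z_{j,t,i}]\ge 0$ for $(j,i)\le n$ and $1\le t\le T$, where $Y_t(x)$ denotes the stagewise cash flow as in the proof of Lemma~\ref{lem: approx by vn}. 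The reasoning of Lemma~\ref{lem: approx by vn} carries over to the ambiguity setting by enumerating the triples $(j,t,i)$ jointly, using Assumptions A2--A3 together with the Lipschitz bound~\eqref{LipschitzLp}, to give $v_n^\ast\uparrow \pi_a^{\Pep}$.

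\textbf{Dualizing the truncation.} For each $n$, I would introduce nonnegative multipliers $\mu_{j,t,i}$ for the truncated constraints, change every measure to the dominating $\mathbb{P}_0$ (using that all densities $d\mathbb{P}_j/d\mathbb{P}_0$ are bounded), and aggregate per stage into
\[
\bar Z_t \,:=\, \sum_{j,i}\mu_{j,t,i}\,Z_{j,t,i}\,\frac{d\mathbb{P}_j}{d\mathbb{P}_0}\,\in\, L^q(\mathbb{P}_0,\mathcal{F}_t).
\]
The Lagrangian then reads
\[
L(x,\bar Z)\,=\,x_0^\top S_0+\sum_{t=1}^T\mathbb{E}^{\mathbb{P}_0}[C_t\bar Z_t]-\sum_{t=1}^{T-1}\mathbb{E}^{\mathbb{P}_0}\bigl[(x_{t-1}-x_t)^\top S_t\bar Z_t\bigr]-\mathbb{E}^{\mathbb{P}_0}[x_{T-1}^\top S_T\bar Z_T].
\]
Minimizing over the unrestricted $x_0\in\mathbb{R}^m$ and $x_t\in L_\infty^m(\mathcal{F}_t)$ yields a finite value only if the respective coefficients vanish, namely $S_0=\mathbb{E}^{\mathbb{P}_0}[S_1\bar Z_1]$ and $S_t\bar Z_t=\mathbb{E}^{\mathbb{P}_0}[S_{t+1}\bar Z_{t+1}\mid\mathcal{F}_t]$ for $1\le t\le T-1$. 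Testing against the num\'eraire component $S^{(1)}\equiv 1$ forces $(\bar Z_t)_t$ to be a mean-one $\mathbb{P}_0$-martingale, so $\bar Z_t=d\mathbb{Q}/d\mathbb{P}_0\rvert_{\mathcal{F}_t}$ for a unique probability measure $\mathbb{Q}$, and the remaining identity reduces to $\mathbb{E}^{\mathbb{Q}}[S_{t+1}\mid\mathcal{F}_t]=S_t$. The Lagrangian then collapses via the tower property and the $\mathcal{F}_t$-measurability of $C_t$ to $\mathbb{E}^{\mathbb{Q}}[\sum_t C_t]$.

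\textbf{Density identification, limit, and main obstacle.} The crux, which I expect to be the main obstacle, is to recast $\bar Z_t$ as $Z\cdot d\mathbb{P}/d\mathbb{P}_0$ for a \emph{single} pair $(\mathbb{P},Z)\in\Pep\times\mathcal{Z}_{\mathcal{A}_t^{\mathbb{P}}}$, so as to match the dual constraint of $(\textnormal{DD})$. Since $\bar Z_t$ has total $\mathbb{P}_0$-mass one, one sets $\mathbb{P}^t:=\sum_j\bigl(\sum_i\mu_{j,t,i}\bigr)\mathbb{P}_j\in\Pep$ (by convexity of $\Pep$) and $Z^t:=\bar Z_t\bigl/(d\mathbb{P}^t/d\mathbb{P}_0)$; convexity of each $\mathcal{Z}_{\mathcal{A}_t^{\mathbb{P}}}$ together with the uniform boundedness of the densities then gives $Z^t\in\mathcal{Z}_{\mathcal{A}_t^{\mathbb{P}^t}}$, the argument mirroring the direct verification for $\avar_\alpha$ in Section~\ref{subsec: acceptability functionals}. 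Finally, weak-$\ast$ compactness of the admissible multiplier set (again from boundedness of densities) together with the monotone convergence $v_n^\ast\uparrow\pi_a^{\Pep}$ transfer the dual description to the limit, producing $(\textnormal{DD})$; the bid-price formula $(\textnormal{DD}^\prime)$ follows by the symmetric argument with reversed signs. It is precisely the interplay between convexity of $\Pep$ and convexity of the individual superdifferentials -- enforced by the standing hypotheses -- that makes this last identification possible, and it is for this step that domination and uniformly bounded densities are needed.
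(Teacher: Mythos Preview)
Your proposal follows essentially the same route as the paper's own proof: rewrite all constraints against the dominating measure $\mathbb{P}_0$, truncate to finitely many dense subgradients drawn from the $\mathcal{Z}_{\mathcal{A}_t^{\mathbb{P}_j}}$, dualize the resulting finite bilinear program via nonnegative Lagrange multipliers, aggregate the multipliers stagewise into a density process (your $\bar Z_t$, the paper's $W_t^n$) that defines the measure $\mathbb{Q}$, read off the martingale condition from the vanishing of the coefficients of $x_t$, and finally pass to the limit using the monotone convergence $v_n^\ast\uparrow\pi_a^{\Pep}$ supplied by Lemma~\ref{lem: approx by vn}.

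Two minor points of divergence are worth noting. First, for the limit step the paper does not appeal to weak-$\ast$ compactness but closes the duality gap by a direct weak-duality contradiction: any dual-feasible $\mathbb{Q}$ satisfies $\mathbb{E}^{\mathbb{Q}}\bigl[\sum_t C_t\bigr]\le x_0^\top S_0$, so a strict gap $\pi_a'<\pi_a$ would be violated by the finite-level optimizer $\mathbb{Q}^n$ as soon as $\pi_a^n>\pi_a'$. This is slightly cleaner than your compactness sketch and avoids having to identify a limiting $\mathbb{Q}$. Second, the step you rightly flag as the ``main obstacle'' --- recasting the convex combination $\bar Z_t$ as a single product $Z^t\cdot d\mathbb{P}^t/d\mathbb{P}_0$ with $\mathbb{P}^t\in\Pep$ and $Z^t\in\mathcal{Z}_{\mathcal{A}_t^{\mathbb{P}^t}}$ --- is handled in the paper only implicitly: the paper packages everything into the set $\mathfrak{D}_t=\{Z_t f_t:\exists\,\mathbb{P}\in\Pep,\ Z_t\in\mathcal{Z}_{\mathcal{A}_t^{\mathbb{P}}},\ f_t=d\mathbb{P}/d\mathbb{P}_0\rvert_{\mathcal{F}_t}\}$ and asserts that finite convex combinations stay in $\mathfrak{D}_t$. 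Your explicit verification for $\avar_\alpha$ (via $\mathbb{P}^t:=\sum_j(\sum_i\mu_{j,t,i})\mathbb{P}_j$ and the pointwise bound $Z^t\le 1/\alpha$) is correct and in fact more detailed than what the paper provides; for general acceptability functionals both arguments rely on the same unstated structural convexity of $\mathfrak{D}_t$.
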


\begin{proof}
Define
\begin{align*}
\mathfrak D_t := \left \{ Z_t f_t\colon~ \exists~ \mathbb P \in \Pep \textnormal{ s.t. } Z_t \in \mathcal{Z}_{\mathcal{A}_{t}^{\mathbb P}}, {\left.\frac{d\mathbb{P}}{d\Ph}\middle\vert\right.}_{\mathcal{F}_{t}} = f_t \right\}.
\end{align*}
Then, the constraints in $(\rm{PP}^\prime)$ can be written in the form
$$\mathbb{E}^{\Ph}[(x_{t-1}-x_t)^\top S_t - C_t) \mathfrak d_t ] \ge 0 \qquad \forall \mathfrak d_t \in \mathfrak D_t \, .$$
Since all densities $f_t$ are bounded by assumption,\footnote{It would be sufficient to assume $\mathcal{Z}_{\mathcal{A}_{t}} \subseteq L_s$ and $f_t \in L_r$ such that $\frac{1}{r} + \frac{1}{s} = \frac{1}{q}$. However, for simplicity, we keep $\mathcal{Z}_{\mathcal{A}_{t}} \subseteq L_q$ and assume $f_t \in L_\infty$.} Lemma~\ref{lem: approx by vn} holds true if we replace $Z_t \in \mathcal Z_t$ by $\mathfrak d_t \in \mathfrak D_t$. It can easily be seen that for each $t$ there are sequences $(\mathfrak d_{t,1},\mathfrak d_{t,2}, \ldots)$ which are dense in $\mathfrak D_t$. Let us define

{\small
\begin{align*}
\mathfrak D_t^n := \left \{ \sum_{i=1}^{n_1} \sum_{j=1}^{n_2^{i}} \lambda_{i,j} Z_t^{j,i} f_t^{i}\colon~ \sum_{i=1}^{n_1} \sum_{j=1}^{n_2^{i}} \lambda_{i,j} = 1,  \left\vert\left\{(i,j) : 1 \leq i \leq n_1, 1\leq j \leq {n_2^{i}} \right\}\right\vert = n \right\} \, .
\end{align*}
}Then, it holds that $\mathfrak D_t^{n} \subseteq \mathfrak D_t^{n+1}$ and $\bigcup_n \mathfrak D_t^n = \mathfrak D_t$. Thus, by Lemma \ref{lem: approx by vn} we may approximate $(\rm{PP})$ by a problem of the form

{\small{
\begin{align*}[left = ({\rm{PP}}_n)\empheqlVert\,]
\min_{x}~ & x_{0}^{\top}S_{0}\\
\textnormal{s.t. }&\mathbb{E}^{\mathbb{\Ph}}\left[(-x_{t-1}^{\top}S_{t}+x_{t}^{\top}S_{t}+C_{t}) \cdot Z_{t}^{i,j}f_t^{i}\right]  \leq0\hspace{5mm}\forall t<T;\forall i \leq n_1;\forall j \leq n_2^{i}\\
&\mathbb{E}^{\mathbb{\Ph}}\left[(-x_{T-1}^{\top}S_{T}+C_{T}) \cdot Z_{T}^{i,j}f_T^{i}\right] \leq0\hspace{5mm}\forall i \leq n_1;\forall j \leq n_2^{i} \, .
\end{align*}
}}Rearranging its Lagrangian leads to the following representation of $({\rm{PP}}_n)\,$:

{\small
\begin{align}[left = \empheqlVert\,]
%\inf_{x} \sup_{\substack{\lambda_{0}\geq0 \\ \lambda_{t}^{i,j}\geq0}}
\inf_{x} \sup_{\lambda_{0}\geq0, \lambda_{t}^{i,j}\geq0}
~  \Bigl \{& x_{0}^{\top}\left(\lambda_{0}S_{0}-\mathbb{E}^{\Ph}\left[S_{1}W_1^n\right]\right)\nonumber \\
 & +\sum_{t=1}^{T-1}\mathbb{E}^{\Ph}\left[x_{t}^{\top}\left(S_{t}W_t^n-\mathbb{E}^{\Ph}\left[S_{t+1}W_{t+1}^n\middle\vert\mathcal{F}_{t}\right]\right)\right]\nonumber \\
 & +\sum_{t=1}^{T}\mathbb{E}^{\Ph}\left[C_{t}W_t^n\right] \Bigr\} \, ,
\label{eq: minimax distr robust}
\end{align}
}where

{\small
\begin{align*}
W_{t}^n:=\sum_{i=1}^{n_1}\sum_{j=1}^{n_2^{i}}\lambda_{t}^{i,j}Z_{t}^{i,j}f_{t}^{j}\,.
\end{align*}
}This is a finite-dimensional bilinear problem. Notice that $({\rm{PP}}_n)$ is always feasible.\footnote{This follows from the fact that a feasible solution $(x_0,\ldots,x_{T-1})$ of $({\rm{PP}}_n)$ can easily be constructed in a deterministic way, starting with $x_{T-1}\,$.} We may thus interchange the $\inf$ and the $\sup$. Carrying out explicitly the minimization in $x$, the unconstrained minimax problem~\eqref{eq: minimax distr robust} can be written as the constrained maximization problem

{\small
\begin{align*}[left = \empheqlVert\,]
\sup_{\lambda_{t}^{i,j}\geq0}~ & \sum_{t=1}^{T}\mathbb{E}^{\Ph}\left[C_{t}W_{t}^n\right]\\
\textnormal{s.t.}~ & S_{t}W_{t}^n=\mathbb{E}^{\Ph}\left[S_{t+1}W_{t+1}^n\middle\vert\mathcal{F}_{t}\right]\hspace{5mm}\forall t=1,\ldots,T\\
 & W_{t}^n = \sum_{i=1}^{n_1}\sum_{j=1}^{n_2^{i}}\lambda_{t}^{i,j}Z_{t}^{i,j}f_{t}^{j} \hspace{5mm} \forall t=1,\ldots,T\,.
\end{align*}
}Introducing a new probability measure $\mathbb Q$ defined by the Radon-Nikod\'ym derivative $\frac{d\mathbb{Q}}{d\mathbb{\Ph}} =W_T^n$, the problem can be rewritten in terms of $\mathbb Q$ in the form
{\small
\begin{align*}[left = ({\rm{DD}}_n) \empheqlVert\,]
\sup_{\mathbb{Q}}~ & \mathbb{E}^{\mathbb{Q}}\left[\sum_{t=1}^{T}{C}_{t}\right]\\
\textnormal{s.t.}~ & \mathbb{E}^{\mathbb{Q}}\left[{S}_{t+1}\middle\vert\mathcal{F}_{t}\right]={S}_{t},\hspace{5mm}\forall t=0,\ldots,T-1\\
& {\left.\frac{d\mathbb{Q}}{d\mathbb{\Ph}}\middle\vert\right.}_{\mathcal{F}_{t}} \in \mathfrak D_t^{n} \; .
\label{firstcaseclassicalacceptabilityhedgingfinalformulationwithoutinZt-1}
\end{align*}
}It is left to show that there is no duality gap in the limit, as $n\rightarrow\infty\,$. Assume that the dual problem $(\rm{DD})$ has an optimal value $\pi_{a}^{\prime}\neq\pi_{a}\,$. By the primal constraints in $(\rm{PP})$, for any dual feasible
solution $\mathbb{Q}$ it holds
$$ \mathbb{E}^{\mathbb{Q}}\left[\sum_{t=1}^{T}{C}_{t}\right] \leq\mathbb{E}^{\mathbb{P}}\left[\sum_{t=1}^{T-1}(x_{t-1}^{\top}{S}_{t}-x_{t}^{\top}{S}_{t})\cdot Z_{t} f_t+x_{T-1}^{\top}{S}_{T} \cdot Z_t f_T\right] =x_{0}^{\top}S_{0} \, .$$
Thus, the optimal primal solution $\pi_{a}$ is also greater than or equal to the optimal dual solution $\pi_{a}^{\prime}\,$. Now assume $\pi_{a}^{\prime}<\pi_{a}\,$. Then, since $\pi_{a}^{n}\uparrow\pi_{a}$ by Lemma \ref{lem: approx by vn}, there must exist some $n$ such that $\pi_{a}^{n}>\pi_{a}^{\prime}\,$. Moreover, there exists some $\mathbb{Q}^{n}$, which is dual feasible and such that $\mathbb{E}^{\mathbb{Q}^{n}}\left[\sum_{t=1}^{T}{C}_{t}\right]=\pi_{a}^{n}\,$. This is a contradiction to $\pi_{a}^{\prime}$ being the limit of the monotonically increasing sequence of optimal values of the approximate dual problems of the form $({\rm{DD}}_n)$.
Hence, $\pi_{a}^{\prime}=\pi_{a}$, i.e., it is shown that there is no duality gap in the limit.

Finally, considering the structure of $\mathfrak D_t$, the condition ${\left.\frac{d\mathbb{Q}}{d\mathbb{\Ph}}\middle\vert\right.}_{\mathcal{F}_{t}} \in \mathfrak D_t$ means that it is of the form  $Z_t f_t$, where there exists some $\mathbb P \in \Pep$ such that $Z_t \in \mathcal Z_{\mathcal A_t^{\mathbb P}}$ and ${\left.\frac{d\mathbb{P}}{d\mathbb{\Ph}}\middle\vert\right.}_{\mathcal{F}_{t}}=f_t$. This completes the derivation of the dual problem formulation $({\rm{DD}})$.
\end{proof}

\subsection{Nested distance balls as ambiguity sets: a large deviations result}
\label{subsec: nested dist balls}

In order to find appropriate nonparametric distances for probability models used in the framework of stochastic optimization, one has to observe that a minimal requirement is that it metricizes weak convergence and allows for convergence of empirical distributions. The Kantorovich-Wasserstein distance does metricize the weak topology on the family of probability measures having a first moment. Its multistage generalization, the nested distance, measures the distance between stochastic processes on filtered probability spaces. The Appendix contains the definition and interpretation of both, the Kantorovich-Wasserstein distance and the nested distance.

Realistic probability models must be based on observed data. While for single- or vector-valued random variables with finite expectation the empirical distribution based on an i.i.d. sample converges in Kantorovich-Wasserstein distance to the underlying probability measure, the situation is more involved for stochastic processes. The simple empirical distribution for stochastic processes does not converge in nested distance (cf. Pflug and Pichler \cite{PflugPichler2016}), but a smoothed version involving density estimates does.

As we show here by merging  the concepts of kernel estimations and transportation distances, one may get good estimates for confidence balls and ambiguity sets under some assumptions on regularity.

Let $\PP$ be the distribution of the stochastic process $\xi=(\xi_1, \dots, \xi_T)$ with values $\xi_t \in \mathbb{R}^m$. Notice that $\PP$ is a distribution on $\mathbb{R}^\ell$ with $\ell= m\cdot T$. Let $\PP^n$ be the probability measure of $n$ independent samples from $\PP$. If $\xi^{(j)} =(\xi_1^{(j)}, \dots, \xi_T^{(j)})$, $j=1,\dots,n$ is such a sample, then the empirical distribution $\hat{\PP}_n$ puts the weight $1/n$ on each of the paths $\xi^{(j)}$. For the construction of nested ambiguity balls, the empirical distribution has to be smoothed by convolution with a kernel function $k(x)$ for $x \in \mathbb{R}^\ell$. For a bandwidth $h>0$ to be specified later, let $k_h(x)= \frac{1}{h^\ell}k(x/h)$. 
In what follows we will work with the kernel density estimate $\hat{f}_n = \hat{\PP}_n * k_h$, where $*$ denotes convolution.

\vspace*{1em}
{\bf Assumption A4.}
\begin{enumerate}
\item The support of $\PP$ is a set $D= D_1 \times \dots \times D_T$, where $D_i$ are compact sets in $\mathbb{R}^m$;
\item $\PP$ has a Lebesgue density $f$, which is Lipschitz on $D$ with constant $L$;
\item $f$ is bounded from below and from above on $D$ by $0 < \underline{c} \le f(x) \le \overline{c}$;
\item the kernel function $k$ vanishes outside the unit ball and is Lipschitz with constant $L$;
\item the conditional probabilities $\PP_t(A \vert x) = \PP(\xi_t \in A \vert (\xi_1, \dots, \xi_{t-1}) = x)$ satisfy
\begin{equation}
\mathsf{d}\left(\PP_t\left(\cdot|x\right),\PP_t\left(\cdot|y\right)\right)\le\gamma_t\left\Vert x-y\right\Vert ,\qquad x,y\in D\label{eq:Lip}
\end{equation}
for some $\gamma_t>0$. Here, $\mathsf{d}$ denotes the Wasserstein distance for probabilities on $\mathbb{R}^m$.
\end{enumerate}

\begin{rem}
The proof of Theorem \ref{thm:LargeDeviation} below relies on the lower bound $\underline{c}$ of the density. As the denominator of the conditional density $f(x\vert y)= f(x,y)/ f(y)$ has to be estimated by density estimation as well, the bound ensures that the denominator does not vanish. In fact, the assumptions on the compact cube (point 1.) can be weakened to D being a compact set; the proof, however, is slightly more involved then. For the other technical assumptions (under point 5.) we may refer to Mirkov and Pflug \cite{Mirkov2007}.
\end{rem}

\begin{thm}[Large deviation for the nested distance]
\label{thm:LargeDeviation}
Under Assumption~A4 there exists a constant $K >0$ such that
\begin{equation}
\PP^n\left(\nd\left(\PP,\hat{\PP}_n*k_h\right)>\varepsilon\right)< \exp(- K n\varepsilon^{2\ell+4}) \,,\label{eq:7-2}
\end{equation}
for $n$ sufficiently large and appropriately chosen bandwidth $h$. Here, $\nd$ denotes the nested distance.
\end{thm}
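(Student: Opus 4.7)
The plan is to reduce the theorem to a uniform sup-norm bound on the kernel density estimator $\hat f_n$, and then to obtain the exponential tail from a bias--variance decomposition together with a grid-plus-Hoeffding concentration. The first step is to convert the nested distance into a quantity governed by the joint density error. Using the recursive stagewise formulation of $\nd$, one bounds $\nd(\PP,\hat{\PP}_n\ast k_h)$ by a sum over $t=1,\dots,T$ of conditional Wasserstein distances $W_1\bigl(\PP_t(\cdot|x_{1:t-1}),\hat{\PP}_n^{\,t}(\cdot|x_{1:t-1})\bigr)$ plus correction terms controlled by the Lipschitz moduli $\gamma_t$ from~\eqref{eq:Lip}. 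Under Assumption~A4 the joint density $f$ is bounded below by $\underline c$, so every conditional density $f(x_t|x_{1:t-1})=f(x_{1:t})/f(x_{1:t-1})$ has sup-norm error of order $\|\hat f_n-f\|_\infty/\underline c$ on the event where the marginal KDE stays above $\underline c/2$; compactness of each $D_t$ then bounds each $W_1$ by $\mathrm{diam}(D_t)$ times the $L^1$ density difference, which is in turn $O(\|\hat f_n-f\|_\infty)$. This yields an inequality of the form $\nd(\PP,\hat{\PP}_n\ast k_h)\le C\,\|\hat f_n-f\|_\infty$ with $C$ depending only on $T,m,\underline c,\overline c,\mathrm{diam}(D),\{\gamma_t\}$.

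The remaining density-estimation step proceeds by splitting $\hat f_n-f=(\hat f_n-\mathbb E\hat f_n)+(\mathbb E\hat f_n-f)$. The bias is controlled by
\begin{equation*}
|\mathbb E\hat f_n(x)-f(x)|=\Bigl|\int\bigl(f(x-hu)-f(x)\bigr)k(u)\,du\Bigr|\le L\,h\int|u|k(u)\,du=L'h,
\end{equation*}
using that $f$ is Lipschitz. For the stochastic deviation I would use a grid argument: since $k$ is Lipschitz and compactly supported, $\hat f_n$ is Lipschitz with constant at most $L/h^{\ell+1}$, so its sup-norm deviation over $D$ differs from the maximum over a grid of spacing $\eta$ by at most $2L\eta/h^{\ell+1}$. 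At each grid point, Hoeffding applied to the i.i.d.\ bounded summands $k_h(x-\xi^{(j)})\in[0,\|k\|_\infty/h^\ell]$ gives
\begin{equation*}
\PP^n\bigl(|\hat f_n(x)-\mathbb E\hat f_n(x)|>\tau\bigr)\le 2\exp\!\bigl(-c\,n h^{2\ell}\tau^2\bigr),
\end{equation*}
and a union bound over $O(\eta^{-\ell})$ grid points, together with the choice $\eta\sim h^{\ell+2}\tau$, $h\sim\varepsilon$, $\tau\sim\varepsilon$, delivers $\PP^n(\|\hat f_n-f\|_\infty>C_1\varepsilon)\le\exp(-Kn\varepsilon^{2\ell+4})$; the extra power $\varepsilon^4$ absorbs the polynomial pre-factors for $n$ large together with the losses incurred in the conditional-density reduction of Step~1.

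The main obstacle is the first step: transferring a uniform bound on the joint density into control of the nested distance, because the nested distance is built from conditional laws whose density form requires dividing by a marginal KDE that must itself be bounded below. One must therefore restrict to the event that the marginal estimator stays above $\underline c/2$ uniformly on $D$ and propagate the resulting ratio-type errors through the $T$ recursive Wasserstein transports, using~\eqref{eq:Lip} to absorb the mismatches that arise when the conditioning arguments differ on the two sides of the comparison. Tracking these interactions carefully is precisely what produces the exponent $2\ell+4$ in~\eqref{eq:7-2} rather than the $2\ell+2$ one would read off from pure kernel density estimation.
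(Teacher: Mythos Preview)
Your Step~1 (reducing $\nd$ to $\|\hat f_n-f\|_\infty$ via conditional densities and the recursive stagewise bound with constants $\gamma_t$) is essentially the same as the paper's route through its Propositions~\ref{prop:Let} and~\ref{prop:1} together with the inequality $\nd(\PP,\tilde\PP)\le\sum_t\varepsilon_t\gamma_t\prod_{s>t}(1+\gamma_s)$.

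Step~2 is where you diverge. The paper does \emph{not} use a bias--Hoeffding--grid argument; instead it invokes a result of Bolley et~al.\ bounding $\PP^n(\|f-\hat f_n\|_\infty>\varepsilon)$ by $\PP^n(\mathsf d(\hat\PP_n,\PP)>(\varepsilon/2L)^{\ell+2})$ and then applies the large-deviation bound $\PP^n(\mathsf d(\hat\PP_n,\PP)>\eta)\le\exp(-\kappa' n\eta^2)$. Composing these yields $\exp(-\kappa_2 n\varepsilon^{2(\ell+2)})$, which is exactly where the exponent $2\ell+4$ comes from.

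Your direct route is more elementary and in fact sharper: with $h\sim\varepsilon$, $\tau\sim\varepsilon$, Hoeffding plus the grid gives a rate $\exp(-cn\varepsilon^{2\ell+2})$, not $2\ell+4$. Your explanation that ``the extra power $\varepsilon^4$ absorbs the polynomial pre-factors \dots\ together with the losses in Step~1'' is where the proposal goes wrong: Step~1 only contributes a constant factor to the threshold on $\|\hat f_n-f\|_\infty$, and the union-bound prefactor $O(\eta^{-\ell})$ is absorbed into the constant $K$ for $n$ large, not into an extra $\varepsilon$-power. So your argument, carried out cleanly, proves the stronger bound $\exp(-Kn\varepsilon^{2\ell+2})$, which of course implies~\eqref{eq:7-2}; but you should not claim that your computation \emph{produces} the exponent $2\ell+4$. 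That exponent is an artifact of the paper's detour through the Wasserstein distance of the raw empirical measure, not something intrinsic to the problem.
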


The proof of~\eqref{eq:7-2} is based on several steps presented as propositions below. 
To start with we recall two important results for density estimates $\hat{f}_n = \hat{\PP}_n * k_h$ for densities $f$ on $\mathbb{R}^\ell$. %The results are then generalized 
\begin{prop}
\label{thm:3} Under the Lipschitz conditions for $f$ and $k$ given above, it holds that
\begin{equation}
\PP^n\left(\sup_{x\in D}\left|f(x)-\hat{f}_{n}(x)\right|>\varepsilon\right)\le \PP^n\left(\mathsf{d}\left(\hat{\PP}_{n},\PP\right)>\left(\frac{\varepsilon}{2L}\right)^{\ell+2}\right).\label{eq:7}
\end{equation}
if the bandwidth is chosen as $h=\varepsilon/(2 L)$.
\end{prop}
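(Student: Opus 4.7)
My strategy is a bias--deviation decomposition followed by Kantorovich--Rubinstein duality. First I would decompose, for each $x\in D$,
$$|f(x)-\hat f_n(x)|\le \bigl|f(x)-(f*k_h)(x)\bigr| + \bigl|(f*k_h)(x)-(\hat\PP_n*k_h)(x)\bigr|,$$
and estimate the two summands separately.

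For the \emph{bias} term, I would use that $k$ vanishes outside the unit ball (hence $k_h$ outside a ball of radius $h$), that $\int k_h=1$, and that $f$ is $L$-Lipschitz. Writing
$$(f*k_h)(x)-f(x)=\int k_h(x-y)\bigl(f(y)-f(x)\bigr)\,dy$$
and using that the integrand is supported in $\{|x-y|\le h\}$ yields $|f(x)-(f*k_h)(x)|\le L h$ uniformly in $x$.

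For the \emph{deviation} term, the key observation is that the map $y\mapsto k_h(x-y)$ is Lipschitz with constant at most $L h^{-\ell-1}$: this is because $k_h(z)=h^{-\ell}k(z/h)$ inherits the Lipschitz constant of $k$ multiplied by $h^{-\ell-1}$. The Kantorovich--Rubinstein dual representation of the Wasserstein distance then gives, uniformly in $x\in D$,
$$\bigl|(f*k_h)(x)-(\hat\PP_n*k_h)(x)\bigr|=\Bigl|\int k_h(x-y)\,d(\PP-\hat\PP_n)(y)\Bigr|\le \frac{L}{h^{\ell+1}}\,\mathsf{d}(\PP,\hat\PP_n).$$

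Combining both bounds and substituting $h=\varepsilon/(2L)$ forces the bias to be at most $\varepsilon/2$, so that $\sup_{x\in D}|f(x)-\hat f_n(x)|>\varepsilon$ can only occur when $(L/h^{\ell+1})\,\mathsf{d}(\PP,\hat\PP_n)>\varepsilon/2$, i.e. when $\mathsf{d}(\PP,\hat\PP_n)>\varepsilon h^{\ell+1}/(2L)=(\varepsilon/(2L))^{\ell+2}$. Passing to $\PP^n$-probabilities of the corresponding events then delivers the inequality in the proposition.

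The hard part will be the boundary behaviour of the bias bound: for $x$ within distance $h$ of $\partial D$, the convolution integrates $k_h(x-\cdot)$ against the zero extension of $f$, so the naive estimate $Lh$ must be supplemented by a term of order $\overline c$ times the mass of $k_h(x-\cdot)$ outside $D$. This can be absorbed into $L$ by a standard boundary-correction argument (e.g.\ by working with a Lipschitz extension of $f$ to a neighbourhood of $D$, which is possible thanks to the compactness of $D$ and the bounds $\underline c\le f\le\overline c$), but would need to be spelled out for a fully rigorous proof; the interior computation above is the conceptual content.
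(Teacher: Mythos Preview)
Your argument is correct and is precisely the standard bias--deviation decomposition combined with Kantorovich--Rubinstein duality. The paper does not give its own proof of this proposition at all: it simply cites Bolley, Guillin and Villani~\cite{Bolleyetal07}, Prop.~3.1, and the argument you have written is essentially the one found there. So there is nothing to compare on the level of method---you have reconstructed the cited proof.

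Your remark about the boundary is well taken: since $f\ge\underline c>0$ on $D$ and vanishes outside, the zero extension of $f$ is not globally Lipschitz, and the naive bias bound $Lh$ only holds for $x$ at distance at least $h$ from $\partial D$. This is indeed a genuine technical wrinkle that the statement (as formulated here and in the cited reference) sweeps under the rug; in practice it is handled either by restricting the supremum to an $h$-interior of $D$, or by a Lipschitz extension/boundary-correction argument of the kind you sketch. For the purposes of the large-deviations theorem that follows, the interior estimate suffices, so your caveat is appropriate and your proof is complete at the level the paper requires.
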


\begin{proof}
See Bolley et al. \cite[Prop.~3.1]{Bolleyetal07}. 
\end{proof}

\begin{prop}
\label{prop:Let}Let $f$ and $g$ be densities vanishing outside a compact set $D$ and set $\PP^{f}(A)=\int_{A}f(x)\mathrm{d}x$ resp. $\PP^{g}(A)=\int_{A}g(x)\mathrm{d}x\,$. Then their Wasserstein distance $\mathsf{d}$ is bounded by
\begin{equation}
\mathsf{d}\left(\PP^{f},\PP^{g}\right)\le2\Delta\lambda(D)\left\Vert f-g\right\Vert _{\infty}.\label{eq:2}
\end{equation}
\end{prop}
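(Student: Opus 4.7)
The plan is to rely on the Kantorovich--Rubinstein dual representation of the $L^1$-Wasserstein distance, namely
\[
\mathsf{d}(\PP^f,\PP^g)\,=\,\sup_{\phi\colon\operatorname{Lip}(\phi)\le 1}\int_D\phi(x)\bigl(f(x)-g(x)\bigr)\,dx,
\]
and to exploit the fact that $\int_D(f-g)\,dx=0$, since both $f$ and $g$ integrate to one. This invariance allows me to freely shift any test function $\phi$ by a constant without changing the integral, which is the key to controlling $\phi$ uniformly on the compact domain $D$.

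First I would fix an arbitrary $x_0\in D$ and replace any admissible $1$-Lipschitz $\phi$ by $\tilde\phi := \phi-\phi(x_0)$. Then $\operatorname{Lip}(\tilde\phi)\le 1$ and $|\tilde\phi(x)|\le\|x-x_0\|\le\Delta$ for every $x\in D$, where $\Delta=\operatorname{diam}(D)$. The main estimate is then the elementary chain
\[
\left|\int_D\tilde\phi(x)\bigl(f(x)-g(x)\bigr)\,dx\right|\le \Delta\int_D|f(x)-g(x)|\,dx \le \Delta\cdot\lambda(D)\cdot\|f-g\|_\infty,
\]
and taking the supremum over $\phi$ yields $\mathsf{d}(\PP^f,\PP^g)\le\Delta\,\lambda(D)\,\|f-g\|_\infty$. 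This is in fact sharper than the asserted bound by a factor of two; the constant $2$ in the statement is harmless and presumably reflects a slightly rougher, more direct argument.

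An equivalent primal route, should the dual formulation be inconvenient, is to construct an explicit coupling from the decomposition $f=\min(f,g)+(f-g)^+$ and $g=\min(f,g)+(g-f)^+$: keep the common mass on the diagonal and transport the two residues via a product coupling, each step of which is of cost at most $\Delta$ and moves a total mass of $\tfrac{1}{2}\|f-g\|_{L^1(D)}$. I do not anticipate a genuine obstacle; the only point requiring care is the appeal to the Kantorovich--Rubinstein formula itself, which is standard for the $L^1$-cost on a compact metric space but is worth citing.
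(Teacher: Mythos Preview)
Your argument is correct and complete. The paper itself does not give a proof of this proposition; it simply cites \cite[Prop.~4]{PflugPichler2016}. So there is no ``paper's own proof'' to compare against here---you have supplied one where the authors only pointed to the literature.

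Your Kantorovich--Rubinstein argument is the standard route and, as you observe, it actually yields the sharper constant $\Delta\,\lambda(D)$ rather than $2\Delta\,\lambda(D)$. The factor of~$2$ in the stated inequality is indeed slack; it may simply be inherited from the cited source or from a coarser primal coupling of the type you sketch in your second paragraph (moving the positive and negative parts of $f-g$ separately and bounding each by $\|f-g\|_\infty\lambda(D)$ rather than by $\tfrac12\|f-g\|_{L^1}$). Either way, nothing is missing from your argument, and your only caveat---that Kantorovich--Rubinstein duality should be cited---is appropriate but entirely standard on a compact domain with finite first moments.
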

Here $\Delta$ is the diameter of $D$  and $\lambda(D)$ is the Lebesgue measure of $D$.
\begin{proof}
Cf.\ \citet[Prop.~4]{PflugPichler2016}. 
\end{proof}

The next result extends the previous for conditional densities.
\begin{prop}
\label{prop:1}Let $f$ and $g$ be bivariate densities on compact sets $\bar{D}_1 \times \bar{D}_2$ bounded by
$0<\underline{c} \le f,g \le \overline{c} <\infty$ which are sufficiently close so that $\left\Vert f-g\right\Vert _{\bar{D}_{1}\times \bar{D}_{2}}\le \underline{c}\lambda(\bar{D}_{1}\times \bar{D}_{2}) [2\Delta^{\ell}]^{-1}\,$.
Then there is a universal constant $\kappa_1$, depending on the set $\bar{D}:=\bar{D}_{1}\times \bar{D}_{2}$ only, so that the conditional densities are close as well, i.e., they satisfy
\[
\left|f(x|y)-g(x|y)\right|\le \kappa_1 \sup_{x^{\prime}\in \bar{D}_{1},y^{\prime}\in \bar{D}_{2}}\left|f(x^{\prime},y^{\prime})-g(x^{\prime},y^{\prime})\right|
\]
for all $x\in \bar{D}_{1}$ and $y\in \bar{D}_{2}$, i.e., 
\begin{equation}
\sup_{y\in \bar{D}_{2}}\left\Vert f(\cdot|y)-g(\cdot|y)\right\Vert _{\bar{D}_{1}}\le \kappa_1 \left\Vert f-g\right\Vert _{\bar{D}_{1}\times \bar{D}_{2}}.\label{eq:18}
\end{equation}
\end{prop}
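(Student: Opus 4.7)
The plan is to bound the pointwise difference of the conditional densities by directly manipulating the quotient. Let $f_2(y):=\int_{\bar D_1} f(x,y)\,dx$ and $g_2(y):=\int_{\bar D_1} g(x,y)\,dx$ be the $\bar D_2$-marginals, so that $f(x|y)=f(x,y)/f_2(y)$ and $g(x|y)=g(x,y)/g_2(y)$. I would write
\[
f(x|y)-g(x|y)=\frac{f(x,y)\,g_2(y)-g(x,y)\,f_2(y)}{f_2(y)\,g_2(y)},
\]
and decompose the numerator by adding and subtracting $g(x,y)\,g_2(y)$:
\[
f(x,y)\,g_2(y)-g(x,y)\,f_2(y)=g_2(y)\bigl(f(x,y)-g(x,y)\bigr)+g(x,y)\bigl(g_2(y)-f_2(y)\bigr).
\]
This separates the contribution coming from the pointwise perturbation of the joint density from the contribution coming from the perturbation of the marginal.

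Next I would estimate each piece uniformly in $(x,y)\in\bar D_1\times\bar D_2$. The pointwise bound $|f(x,y)-g(x,y)|\le \|f-g\|_{\bar D_1\times\bar D_2}$ is immediate; for the marginal difference, integrating over $\bar D_1$ gives $|g_2(y)-f_2(y)|\le \lambda(\bar D_1)\,\|f-g\|_{\bar D_1\times\bar D_2}$. For the denominator, the pointwise lower bound $f,g\ge\underline c$ implies $f_2(y),\,g_2(y)\ge \underline c\,\lambda(\bar D_1)$, keeping the denominator safely away from zero. Combining with $|g(x,y)|\le\overline c$ one arrives at
\[
|f(x|y)-g(x|y)|\le\left(\frac{1}{\underline c\,\lambda(\bar D_1)}+\frac{\overline c}{\underline c^{\,2}\,\lambda(\bar D_1)}\right)\|f-g\|_{\bar D_1\times\bar D_2},
\]
so one may take $\kappa_1$ equal to the bracketed expression, a constant that depends only on $\bar D$ and on the bounds $\underline c,\overline c$. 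Taking the supremum over $y\in\bar D_2$ then yields~\eqref{eq:18}.

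The smallness hypothesis on $\|f-g\|_{\bar D_1\times\bar D_2}$ is not strictly necessary for the algebra above, since $g_2(y)\ge\underline c\,\lambda(\bar D_1)>0$ already makes the denominator bounded away from zero; it merely fixes a convenient neighborhood on which the inequality is stated and rules out degenerate regimes. The main thing to watch — essentially the only pitfall — is the bookkeeping of the measures of $\bar D_1$ versus $\bar D_1\times\bar D_2$: the marginal $f_2(y)$ is an integral over $\bar D_1$ only, so it must be bounded below by $\underline c\,\lambda(\bar D_1)$, not by $\underline c\,\lambda(\bar D_1\times\bar D_2)$. Once this is kept consistent, the bound on $\kappa_1$ falls out immediately from the decomposition.
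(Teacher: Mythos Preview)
Your proof is correct, and in fact cleaner than the paper's. The paper does not use your algebraic identity for the difference of two quotients; instead it writes $g_2(y)=f_2(y)\bigl(1+\tfrac{g_2(y)-f_2(y)}{f_2(y)}\bigr)$ and applies the elementary inequality $\tfrac{1}{1+x}\le 1+2|x|$ for $x\ge -\tfrac12$, which is precisely why the smallness hypothesis $\|f-g\|_{\bar D_1\times\bar D_2}\le \underline c\,\lambda(\bar D)[2\Delta^\ell]^{-1}$ appears: it is used to guarantee that the relative marginal perturbation $|g_2(y)-f_2(y)|/f_2(y)$ stays below $\tfrac12$. This produces the constant $\kappa_1=\tfrac{1}{\underline c\,\lambda(\bar D_1)}+\tfrac{2\overline c\,\Delta^{\ell}}{(\underline c\,\lambda(\bar D_1))^2}$, with an extra factor of~$2$ and a $\Delta^\ell$ where you have $\lambda(\bar D_1)$. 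Your direct decomposition of the numerator bypasses the need for the smallness assumption altogether (as you correctly observe) and yields a sharper, structurally simpler constant; the paper's route is slightly more roundabout but of course reaches the same qualitative conclusion.
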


\begin{proof}
To abbreviate the notation set $\varepsilon:=\sup_{x,y}\left|f(x,y)-g(x,y)\right|$ and note that $\varepsilon\le \underline{c}\lambda(\bar{D}) [2\Delta^{\ell}]^{-1}\,$. Consider
the marginal density $f(y):=\int_{\bar{D}_{1}}f(x,y)\mathrm{d}x$ ($g(y):=\int_{\bar{D}_{1}}g(x,y)\mathrm{d}x$,
resp.). It holds that
\[
\left|f(y)-g(y)\right|\le\int_{\bar{D}_{1}}\left|f(x,y)-g(x,y)\right|\mathrm{d}x\le\int_{\bar{D}_{1}}\varepsilon \, \mathrm{d}x\le\Delta^{\ell }\cdot\varepsilon\,.
\]
Clearly $|f(y)|\ge \underline{c}\lambda(\bar{D}_{1})$,
where $\lambda(\bar{D}_{1})$ is the Lebesgue measure of $\bar{D}_{1}$ and therefore
\begin{equation}
\left|\frac{f(y)-g(y)}{f(y)}\right|\le\frac{\Delta^{\ell}}{\underline{c}\lambda(\bar{D}_{1})}\cdot\varepsilon\le\frac{1}{2}\,.\label{eq:5}
\end{equation}

The elementary inequality $\frac{1}{1+x}\le1+2\left|x\right|$ is
valid for $x\ge-\nicefrac{1}{2}$. With~\eqref{eq:5} it follows
that
\begin{align*}
g(x|y)-f(x|y) & =\frac{g(x,y)}{g(y)}-\frac{f(x,y)}{f(y)}=\frac{g(x,y)}{f(y)}\cdot\frac{1}{1+\frac{g(y)-f(y)}{f(y)}}-\frac{f(x,y)}{f(y)}\\
 & \le\frac{g(x,y)}{f(y)}\left(1+2\frac{|g(y)-f(y)|}{f(y)}\right)-\frac{f(x,y)}{f(y)}\\
 & =\frac{g(x,y)-f(x,y)}{f(y)}+2\frac{g(x,y)}{f(y)}\frac{|g(y)-f(y)|}{f(y)}\\
 & \le\frac{\varepsilon}{\underline{c}\lambda(\bar{D}_{1})}+2\frac{\overline{c}}{\underline{c} \lambda(\bar{D}_1)}\frac{\Delta^{\ell }}{\underline{c}\lambda(\bar{D}_{1})}\cdot\varepsilon\le \kappa_1 \varepsilon
\end{align*}
with $\kappa_1=\frac{1}{\underline{c}\lambda(\bar{D}_{1})}+\frac{2\overline{c}\Delta^{\ell}}{(\underline{c}\lambda(\bar{D}_{1}))^2}$.
The assertion of the proposition finally follows by exchanging the
roles of the densities $f$ and $g$. 
\end{proof}

\begin{thm}
Given Assumption A4 there exists a constant
$\kappa_2$ such that
\begin{equation}
\PP^n \left(\sup_{y\in \bar{D}_{2}}\mathsf{d}\left(\PP^{f(\cdot|y)},\PP^{\hat{f}_{n}(\cdot|y)}\right)>\varepsilon\right)\le \exp(- \kappa_2  n\varepsilon^{2\ell+4})\label{eq:5-1}
\end{equation}
 for all $\varepsilon>0$ and $n$ sufficiently large.
\end{thm}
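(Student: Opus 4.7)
The idea is to chain the three preceding propositions, each of which reduces the problem by one step, and to close with a Bolley--Guillin--Villani style concentration bound for the empirical Wasserstein distance.

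First I would apply Proposition~\ref{prop:1} to pass from joint to conditional densities: on the event $\{\|f-\hat f_n\|_\infty \le \underline c\,\lambda(\bar D)/(2\Delta^\ell)\}$ required by its hypothesis, the inequality~\eqref{eq:18} gives
\[
\sup_{y\in\bar D_2}\bigl\| f(\cdot|y)-\hat f_n(\cdot|y)\bigr\|_{\infty,\bar D_1} \le \kappa_1\,\|f-\hat f_n\|_\infty.
\]
Proposition~\ref{prop:Let}, applied for each fixed $y$ to the densities $f(\cdot|y)$ and $\hat f_n(\cdot|y)$ on $\bar D_1$, then turns this sup-norm bound into a Wasserstein bound, so that with $C:=2\Delta\lambda(\bar D_1)\kappa_1$,
\[
\sup_{y\in\bar D_2}\mathsf d\bigl(\PP^{f(\cdot|y)},\PP^{\hat f_n(\cdot|y)}\bigr) \le C\,\|f-\hat f_n\|_\infty.
\]
Consequently, for $\varepsilon$ small enough that $\varepsilon/C$ lies below the threshold of Proposition~\ref{prop:1}, the event $\{\sup_y\mathsf d(\cdots)>\varepsilon\}$ is contained in $\{\|f-\hat f_n\|_\infty>\varepsilon/C\}$.

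Next, I would invoke Proposition~\ref{thm:3} with the prescribed bandwidth choice $h=\varepsilon/(2LC)$ to obtain
\[
\PP^n\bigl(\|f-\hat f_n\|_\infty>\varepsilon/C\bigr)\le \PP^n\!\left(\mathsf d(\hat\PP_n,\PP)>\bigl(\tfrac{\varepsilon}{2LC}\bigr)^{\ell+2}\right).
\]
Finally I would apply the Bolley--Guillin--Villani large deviation estimate for the empirical Wasserstein distance: since $\PP$ has compactly supported, bounded density, there exists $c>0$ (depending on $D$, $\underline c$, $\overline c$) such that $\PP^n(\mathsf d(\hat\PP_n,\PP)>\delta)\le\exp(-cn\delta^2)$ for $n$ large and $\delta$ in a suitable range. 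Substituting $\delta=(\varepsilon/(2LC))^{\ell+2}$ produces an exponent proportional to $n\,\varepsilon^{2(\ell+2)}=n\,\varepsilon^{2\ell+4}$, which is the stated rate, with $\kappa_2$ absorbing all multiplicative constants.

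\textbf{Expected main obstacle.} The delicate bookkeeping is the compatibility of the various smallness hypotheses. The conditional-density bound of Proposition~\ref{prop:1} is only valid when the joint density estimate is already close, and the sup-norm-to-Wasserstein reduction requires the bandwidth $h$ to be tuned as a precise power of $\varepsilon$. Both are standard but must be respected simultaneously; the clause ``$n$ sufficiently large'' in the statement exists precisely so that the exceptional event on which Proposition~\ref{prop:1} does not apply is itself dominated by a term of the asserted exponential form and can be absorbed into $\kappa_2$. Tracking how the exponent $\ell+2$ from the inverse in Proposition~\ref{thm:3} interacts with the quadratic exponent $2$ from the empirical Wasserstein concentration is what produces the final exponent $2\ell+4$, and this is the only genuinely quantitative step in the chain.
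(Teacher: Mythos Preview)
Your proposal is correct and follows essentially the same route as the paper: chain Proposition~\ref{prop:Let} and Proposition~\ref{prop:1} to bound the conditional Wasserstein distance by a constant times $\|f-\hat f_n\|_\infty$, invoke Proposition~\ref{thm:3} to pass to $\mathsf d(\hat\PP_n,\PP)$, and finish with the Bolley--Guillin--Villani concentration inequality $\PP^n(\mathsf d(\hat\PP_n,\PP)>\eta)\le\exp(-\kappa' n\eta^2)$. Your constant $C$ plays the role of the paper's $\kappa_3$, and you are slightly more explicit than the paper about the smallness hypothesis of Proposition~\ref{prop:1} being absorbed into the ``$n$ sufficiently large'' clause.
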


\begin{proof}
It follows from~\eqref{eq:2} and~\eqref{eq:18} that
\[
\mathsf{d}\left(\PP^{f(\cdot|y)},\PP^{\hat{f}_{n}(\cdot|y)}\right)\le \kappa_3 \left\Vert f(\cdot|y)-\hat{f}_{n}(\cdot|y)\right\Vert _{\infty}\le \kappa_3 \left\Vert f-\hat{f}_{n}\right\Vert _{\infty}
\]
for $\kappa_3=2 \Delta \lambda(D) \kappa_1$. Recall the large deviation result from \citet[Th.~2.8]{Bolleyetal07}, which is given by
$$\PP^n(\mathsf{d}(\hat{\PP}_n,\PP) > \eta) \le \exp(-n \kappa^\prime \eta^2)\, ,$$
for some universal constant $\kappa^\prime$ depending on the Lipschitz constants of $f$ and $k$ only.

With~\eqref{eq:7} it follows that
\begin{align*}
\PP&\left(\sup_{y\in \bar{D}_{2}}\mathsf{d}\left(\PP^{f(\cdot|y)},\PP^{\hat{f}_{n}(\cdot|y)}\right)>\varepsilon\right)\le \PP\left(\left\Vert f-\hat{f}_{n}\right\Vert _{\infty}>\frac{\varepsilon}{\kappa_3}\right)\\
&\le \PP^n\left(\mathsf{d} \left(\hat{\PP}_{n},\PP\right) > \frac{\varepsilon^{\ell+2}}{(2L\kappa_3)^{\ell+2}}\right) \le \exp\left\{-\kappa^\prime n\left(\frac{\varepsilon^{\ell+2}}{(2L\kappa_3)^{\ell+2}}\right)^{2}\right\} \, .
\end{align*}
Setting $\kappa_2:=\kappa^\prime (2L\kappa_3)^{-2\ell-4}$
in~\eqref{eq:5-1} reveals the result.
\end{proof}
\begin{proof}[Theorem~\ref{thm:LargeDeviation}]
The previous theorem will be applied to the conditional densities of $\xi_t$ given the past $\xi_1, \dots, \xi_{t-1}$. Thus the sets $\bar{D}_i$ are interpreted as
$\bar{D}_1 = D_t$ and $\bar{D}_2 = D_1 \times \dots \times D_{t-1}$. For the probability measure $\PP$ satisfying~\eqref{eq:Lip} and
any other measure $\tilde{\PP}$ satisfying $\mathsf{d}\left(\PP_t\left(\cdot|x\right),\tilde{\PP}_t\left(\cdot|x\right)\right)\le\varepsilon_{t}$
at stage $t$ we have that
\[
\nd\left(\PP,\tilde{\PP}\right)\le\sum_{t=1}^{T}\varepsilon_{t}\gamma_{t}\prod_{s=t+1}^{T}(1+\gamma_{s}),
\]
see \citet[Sec.~4.2]{PflugPichler14} or \citet{Mirkov2007}.

We employ the results elaborated above for $\tilde{\PP}:=\hat{\PP}_{n}*k_{h}$.
Then
\begin{align*}
&\PP^n\left(\nd\left(\PP,\hat{\PP}_n*k_{h}\right)>\varepsilon\right) \\
\le ~&\PP^n\left(\sum_{t=1}^{T} \mathsf{d} \left(\PP_{t}\left(\cdot|x_{t}\right),\tilde{\PP}_{t}\left(\cdot|x_{t}\right)\right)\gamma_{t}\prod_{s=t+1}^{T}(1+\gamma_{s})>\varepsilon\right)\\
% & \le\sum_{t=1}^{T}\PP^n \left(\mathsf{d}\left(\PP_{t}\left(\cdot|x_{t}\right),\tilde{\PP}_{t}\left(\cdot|x_{t}\right)\right)\gamma_{t}\prod_{s=t+1}^{T}(1+\gamma_{s})>\nicefrac{\varepsilon}{T}\right)\\
=~&\sum_{t=1}^{T}\PP^n \left(\mathsf{d}\left(\PP_{t}\left(\cdot|x_{t}\right),\tilde{\PP}_{t}\left(\cdot|x_{t}\right)\right)>\frac{\varepsilon}{T\gamma_{t}\prod_{s=t+1}^{T}(1+\gamma_{s})}\right).
\end{align*}
We employ~\eqref{eq:5-1} to deduce that
\[
\PP^n \left(\nd\left(\PP,\hat{\PP}_n*k_{h}\right)>\varepsilon\right)\le\sum_{t=1}^{T}e^{-\kappa_2  n\varepsilon_{t}^{2\ell+4}}
\]
with $\varepsilon_{t}:=\varepsilon [T\gamma_{t}\prod_{s=t+1}^{T}(1+\gamma_{s})]^{-1}$.

The desired large deviation result follows for $n$ sufficiently large
for any $K<\min_{t\in\left\{ 1,\dots,T\right\} } \kappa_2 \left[\left(T\gamma_{t}\prod_{s=t+1}^{T}(1+\gamma_{s})\right)^{2\ell+4}\right]^{-1}$. 
\end{proof}

The smoothed model $\hat{\mathbb{P}}_n*k_{h}$ is not yet a tree, but by Theorem~\ref{thm: approx by tree} of the Appendix one may find\footnote{See \cite[Chap.~4]{PflugPichler14} for methods to efficiently construct multistage models/\,scenario trees from data.} a finite tree process $\bar{\mathbb{P}}_n$, which is arbitrarily close to it. Therefore, by eventually increasing the probability bound in \eqref{eq:7-2} by another constant factor, it holds true also for $\bar{\mathbb{P}}_n\,$.

\begin{rem}
From a statistical perspective, the results contained in this section represent a strong motivation to use nested distance balls as ambiguity sets for general stochastic optimization problems on scenario trees constructed from observed data. In particular, the distributionally robust acceptable ask price allows the seller of a claim to invest in a trading strategy which gives an acceptable superhedge of the payments to be made under the \emph{true} model with arbitrary high probability, given sufficient available data.
\end{rem}

\section{Illustrative examples \label{sec:Numerical Solutions}}

One may summarize the results of the previous sections in the following way: If the martingale measure is not unique (`incomplete market'), then typically there is a positive bid-ask spread in the (pointwise) replication model. This spread does also exist in the acceptability model. However, if the acceptability functional is the $\avar_{\alpha}$, then by changing $\alpha$ we can get the complete range between the replication model ($\alpha\rightarrow0)$ and the expectation model ($\alpha=1)$. At least in the latter case, but possibly even for some $\alpha < 1\,$, there is no bid-ask spread and thus a unique price. On the other hand, model ambiguity widens the bid-ask spread: The more models are considered, i.e., the larger the radius of the ambiguity set, the wider is the bid-ask spread. For illustrative purposes, let us look at the simplest form of examples which demonstrate these effects.

\begin{example}
\label{ex: ternary tree 1}Consider a three-stage ternary tree, where the paths are uniformly distributed and given by the columns of the matrix

{\small\[\begin{bmatrix} 100& 100& 100& 100& 100& 100& 100& 100& 100 \\ 110& 110& 110& 100& 100& 100& 90& 90& 90 \\ 112& 110& 108& 102& 100& 98& 92& 90& 88 \end{bmatrix} \, .\]}

Since infinitely many equivalent martingale measures can be constructed on this tree, there is a considerable bid-ask spread for the pointwise replication model, which corresponds to the $\avar_{\alpha}$-acceptability pricing model with $\alpha=0$. However, by increasing $\alpha$ for both contract sides, the bid-ask spread gets monotonically smaller. For $\alpha=1$, there is no bid-ask spread, since all martingale measures coincide in their expectation and both buyer and seller only consider expectation in their valuation. Figure \ref{fig: AcceptabilityBAspread} visualizes this behavior for the price of a call option struck at $95\%$: the bid price increases with $\alpha$, while the ask price decreases. For $\alpha=1$ they coincide.

Computationally, $\avar$--acceptability pricing on scenario trees boils down to solving a linear program (LP). It is thus straightforward to implement and the problem scales with the complexity of LPs.
\end{example}

\begin{example}
In contrast, one may consider a three-stage binary tree model with uniformly distributed scenarios given by the columns of the matrix

{\small\[\begin{bmatrix}100& 100& 100& 100 \\ 105& 105& 95& 95 \\ 108& 102& 98& 92 \end{bmatrix} \, .\]}

This tree can carry only one single martingale measure. In such a model, the change of acceptability levels does not change the price, since also under weakened acceptability the price is determined by a martingale measure, namely the unique one (in case $\alpha$ is small enough such that it is feasible). However, in an ambiguity situation, a bid-ask spread may appear, since there are typically many martingale measures contained in ambiguity sets.
We consider nested distance balls around the baseline tree, where we keep the uniform distribution of the scenarios for simplicity, but allow the values of the process to change.\footnote{This is a non-convex problem. The results in Figure~\ref{fig: AmbiguityBAspread} are based on the standard nonlinear solver of a commercial software package (MATLAB 8.5 (R2015a), The MathWorks Inc., Natick, MA, 2015.), which finds (local) optima for our small instance of a problem.} The result for a call option struck at $95\%$ can be seen in Figure~\ref{fig: AmbiguityBAspread}. While there is a unique price for small radii $\varepsilon$ of the nested distance ball, an increasing bid-ask spread appears for larger values of $\varepsilon$.
\end{example}

\begin{figure}
\subfloat[\label{fig: AcceptabilityBAspread} Acceptability: The bid-ask spread tightens for increasing acceptability.]{\includegraphics[scale=0.29]{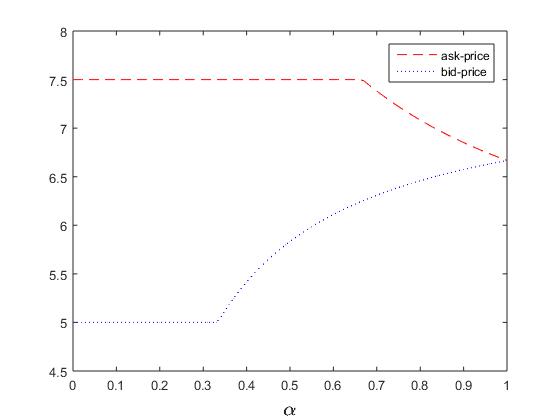}

}\hfill{}\subfloat[\label{fig: AmbiguityBAspread}Ambiguity: A bid-ask spread opens for increasing ambiguity.]{\includegraphics[scale=0.29]{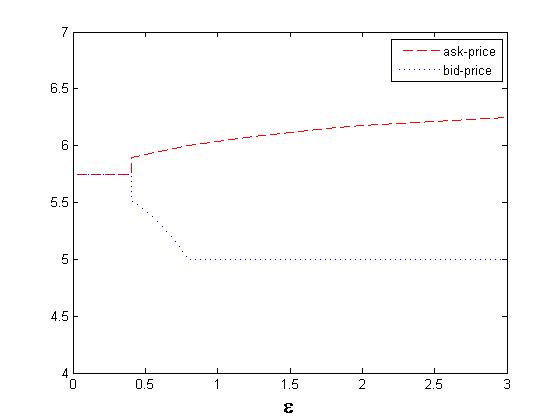}}\caption{Distributionally robust acceptability pricing: The bid-ask spread as a function of the acceptability level $\alpha$ and the ambiguity radius $\varepsilon\,$.}
\end{figure}

\section{Algorithmic solution}
\label{sec: algorithmic solution}

The nested distance between two given scenario trees can be obtained by solving an LP. However, the distributionally robust $\avar$--acceptability pricing problem w.r.t.\ nested distance balls as ambiguity sets results in a highly non-linear, in general non-convex problem. Therefore, we assume the tree structure to be given by the baseline model. In particular, it is assumed that different probability models within the ambiguity set differ only in terms of the transition probabilities; state values and the information structure are kept fixed.

Still, distributionally robust acceptability pricing is a semi-infinite non-convex problem. The only algorithmic approach available in the literature for similar problems is based on the idea of successive programming (cf.\ \cite[Chap.~7.3.3]{PflugPichler14}): an approximate solution is computed by starting with the baseline model only and alternately adding worst case models and finding optimal solutions. However, for typical instances of tree models this is computationally hard, as it involves the solution of a non-convex problem in each iteration step.

Hence, we tackle the dual formulation presented in Theorem~\ref{prop: duality result for robust accept superhedging}. The structure of the nested distance enables an iterative approach. Algorithm~\ref{algo: dual} finds an approximate solution by solving a sequence of linear programs. Based on duality considerations and algorithmic exploitation of the specific stagewise transportation structure inherent to the nested distance, the algorithm approximates the solution of a semi-infinite non-convex problem by a sequence of LPs. The current state-of-the-art method, on the other hand, requires the solution of a non-convex program in each iteration step. Clearly, a sequential linear programming approach improves the performance considerably.\footnote{For our implementations, the speed-up factor for a test problem was on average about 100. However, this may depend heavily on the implementation and the problem.} Moreover, our algorithm turned out to find feasible solutions in many cases where our implementation of a successive programming method fails to do so.

Let us extend the concept of the nested distance to subtrees, iteratively from the leaves to the root ('top-down'). For two scenario trees (here with identical filtration structures), define $\nd_T(i,j)$ as the distance of the paths leading to the leave nodes $i,j \in \mathcal N_T$. Moreover, define \[\nd_t(k,l) := \sum_{i \in k+} \sum_{j \in l+} \pi(i,j \vert k,l)\nd_{t+1}(i,j)\,,\] for all nodes $k,l \in \mathcal N_t$, where $0 \leq t < T\,$. Then, the nested distance between the two trees is given by $\nd_0(1,1)\,$. This stagewise backwards approach (cf. \cite[Alg.~2.1]{PflugPichler14}) is the basic idea of Algorithm~\ref{algo: dual}. As we assume the tree structure to be fixed, Algorithm~\ref{algo: dual} iterates through the tree in the same top-down manner and searches for the optimal solution in each stage, while ensuring that the nested distance constraint remains satisfied. The variables are the conditional transition probabilities under $\mathbb Q$, i.e., $q_i := \mathbb Q[i \vert i-]$, as well as the transportation subplans $\pi(i,j \vert i-,j-)$, as defined in the Appendix. We use the notation $n-$ for the immediate predecessor of some node $n$. As the measure $\mathbb P$ is in fact not needed explicitly since it is given by the transportation plan from $\hat{\mathbb P}\,$, condition (4.3) in Algorithm \ref{algo: dual} serves to ensure that it is still well-defined implicitly (note that always some node $\tilde{k} \in \mathcal N_{t-1}$ needs to be fixed). Condition (1) ensures that $\mathbb Q$ is a martingale measure, $\mathbb Q$ represents conditional probabilities by condition (2), condition (3) corresponds to the constraint on the measure change ($d\mathbb Q / d\mathbb P \leq 1 / \alpha$) resulting from the primal $\avar_\alpha$--acceptability conditions, and (4.1) -- (4.3) represent the constraint that there must be one $\mathbb P$ contained in the nested distance ball such that condition (3) holds.

The algorithm optimizes the variables stagewise top-down. The optimal solution at stage $t+1$ depends on the values of the variables for all stages up to stage $t$, which result from the previous iteration step. Therefore, the algorithm iterates as long as there is further improvement possible at some stage, given updated variable values for the earlier stages of the tree. Otherwise, it terminates and the optimal solution of our approximate problem is found.

\begin{breakablealgorithm}
\caption{ Acceptability pricing under model ambiguity. }
\label{algo: dual}
{\small
\begin{algorithmic}[1]
\noindent Start with some feasible model $\mathbb P$ in the nested distance ball around $\hat{\mathbb P}$. Initialize $\pi_{\textnormal{old}}$ by assigning the optimal transportation plan between $\mathbb P$ and $\hat{\mathbb P}$ and initialize 'oldprice'.

\algblock[Iteration]{Iteration}{EndIteration}
\Iteration
\State [newprice, $\pi_{\textnormal{new}}$] $\gets \textproc{\ref{func: getprice}}(\pi_{\textnormal{old}})$
\If {(oldprice == newprice)}
\State \Return oldprice
\Else
\State oldprice $\gets$ newprice, $\pi_{\textnormal{old}} \gets \pi_{\textnormal{new}}$
\State \textbf{Iterate}
\EndIf
\EndIteration

\Statex

\Function{GetPrice}{$\tilde{\pi}$}
\funclabel{func: getprice}
\For {$t$ \textbf{from} $T$ \textbf{to} $1$} \textbf{solve}
{\scriptsize
\begin{align*}[left = \empheqlVert\,]
\max_{\{ q_i, ~\pi(i,j \vert k,l) ~:~  i,j \in \mathcal N_t \}} ~& \mathbb E^{\mathbb Q} \left[ \sum_{\tau=t}^{T} C_\tau \middle \vert \mathcal F_{t-1} \right] \\
\textnormal{s.t.} ~&\\
& (1)~ \sum_{i \in k+} q_i \cdot x_i = x_k \hspace{1cm} \forall k \in \mathcal N_{t-1} \\
& (2)~ \sum_{i \in k+} q_i = 1 \hspace{1cm} \forall k \in \mathcal N_{t-1} \\
& (3)~ -\sum_{i \in k+} \pi(i,j \vert k,l) + \alpha \cdot q_j \leq 0 \hspace{1cm} \forall j \in \mathcal N_{t} \\
& (4.1)~ \sum_{i \in \mathcal N_t} \sum_{j \in \mathcal N_t} \pi(i,j \vert i-,j-) \cdot \tilde{\pi}(i-,j-) \cdot \nd_t(i,j) \leq \varepsilon \\
& (4.2)~ \sum_{j \in l+} \pi(i,j \vert k,l) =  \hat{\mathbb P}[i \vert k] \hspace{1cm} \forall l \in \mathcal N_{t-1}, \forall i \in \mathcal N_{t}\\
& (4.3)~ \sum_{i \in k+} \pi(i,j \vert k,l) = \sum_{i \in \tilde{k}+} \pi(i,j \vert \tilde{k},l) \hspace{3mm} \forall k \in \mathcal N_{t-1}, \forall j \in \mathcal N_{t} \\
&(5) \hspace {5mm} q_i, \pi(i,j \vert i-,j-) \in [0,1] \hspace{1cm} \forall i,j \in \mathcal N_t
\end{align*}}
\EndFor
\State price $\gets \mathbb E^{\mathbb Q} [ \sum_{t=1}^{T} C_t]$, construct transportation plan $\pi(\cdot, \cdot )$ from subplans $\pi(\cdot, \cdot \vert \cdot, \cdot)$
\State \Return [price, $\pi$]
\EndFunction
\end{algorithmic}}
\end{breakablealgorithm}

\begin{example}
Consider the price of a plain vanilla call option struck at 95, in the Black-Scholes model with parameters $S_0 = 100, r = 0.01, \sigma = 0.2, T = 1$. Applying optimal quantization techniques (see, e.g., \cite[Chap.~4]{PflugPichler14} for an overview) to discretize the lognormal distribution, we construct a scenario tree with 500 nodes. While there exists a unique martingale measure (and thus a unique option price) in the Black-Scholes model, the discrete approximation allows for several martingale measures (and thus a positive bid-ask spread). Figure~\ref{BAspreadBSmodel} visualizes the bid-ask spread as a function of the $\avar$--acceptability level $\alpha$ and the radius $\varepsilon$ of the nested distance ball used as model ambiguity set. For $\alpha \rightarrow 1$ and $\varepsilon = 0$, the spread closes and the resulting price approximates the true Black-Scholes price up to 4 digits. For illustrative purposes, the spread between the bid and the ask price surface is shown from two perspectives.

\begin{figure}
\begin{minipage}{0.45\textwidth}
\begin{center}
\includegraphics[width=\textwidth]{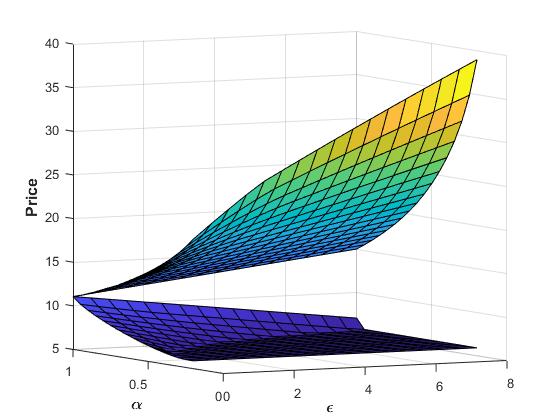}
\end{center}
\end{minipage}
\hfill
\begin{minipage}{0.45\textwidth}
\begin{center}
\includegraphics[width = \textwidth]{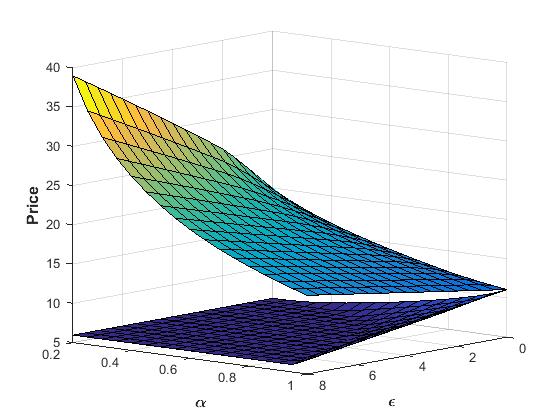}
\end{center}
\end{minipage}
\caption{The bid-ask spread as a function of acceptability and ambiguity.}
\label{BAspreadBSmodel}
\end{figure}
\end{example}

\section{Conclusion}
\label{ref: conclusio}

In this paper we extended the usual methods for contingent claim pricing into two directions. First, we replaced the replication constraint by a more realistic acceptability constraint. By doing so, the claim price does explicitly depend on the stochastic model for the price dynamics of the underlying (and not just on its null sets). If the model is based on observed data, then the calculation of the claim price can be seen as a statistical estimate. Therefore, as a second extension, we introduced model ambiguity into the acceptability pricing framework and we derived the dual problem formulations in the extended setting. Moreover, we used the nested distance for stochastic processes to define a confidence set for the underlying price model. In this way, we link acceptability prices of a claim to the quality of observed data. In particular, the size of the confidence region decreases with the sample size, i.e., the number of observed independent paths of the stochastic process of the underlying. For a given sample of observations, the ambiguity radius indicates how much the baseline ask/\,bid price should be corrected to safeguard the seller/\,buyer of a claim against the inherent statistical model risk, as Section~\ref{sec: algorithmic solution} illustrates.

\bibliographystyle{plain}
%\bibliography{References1,LiteraturAlois}
\bibliography{References1}

\section*{Appendix}

{\bf Distances for random variables and stochastic processes.} Recall the definition of the Kantorovich-Wasserstein distance $\mathsf{d}(P,\tilde{P})$ for two (Borel) random distributions $P$ and $\tilde{P}$ on $\mathbb{R}^m$:
{\small
 \begin{align*}[left = \empheqlVert\,]
\mathsf{d}(P,\tilde{P}) :=\inf_{\pi} & \iint \| \omega - \tilde{\omega}\|~ \pi(d\omega,d\tilde{\omega})\\
\textnormal{s.t. } & \pi\left(A\times \mathbb{R}^m \right)= P(A) \\
 & \pi\left(\mathbb{R}^m \times B \right)=\tilde{P}(B) \, .
\end{align*}
}Here, $\pi$ runs over all Borel measures on $\mathbb{R}^m \times \mathbb{R}^m$ with given marginals $P$ resp. $\tilde{P}$.
These measures are called {\em transportation plans}.  If $\xi$ and $\tilde{\xi}$ are $\mathbb{R}^m$-valued random variables, then their distance is defined as the distance of the corresponding image measures $P^\xi$ resp. $P^{\tilde{\xi}}$.

Pflug and Pichler \cite{Pflug09, PflugPichler12} introduced the notion of the nested distance as a generalization of the Kantorovich-Wasserstein distance for $\mathbb{R}^m$-valued stochastic processes $\xi=(\xi_1, \dots, \xi_T)$ and its image measures $\mathbb{P}$ on $\mathbb{R}^{mT}$. Let $\Fc=(\Fc_1, \dots, \Fc_T)$ be the filtration composed of the sigma-algebras $\Fc_t$ generated by the component projections $(\xi_1, \dots, \xi_T) \mapsto (\xi_1, \dots, \xi_t)\,$. Moreover, let for $\xi = (\xi_1, \dots, \xi_T) \in \mathbb{R}^{mT}$ the distance be defined as $
\| \xi - \tilde{\xi}\| := \sum_{t=1}^T \|\xi_t - \tilde{\xi}_t \|$.

\begin{defn}
The nested distance $\nd$ for distributions $\mathbb{P}$ and $\tilde{\mathbb{P}}$ is defined as
{\small
\begin{align*}[left = \empheqlVert\,]
\nd(\mathbb{P},\tilde{\mathbb{P}}):=\inf_{\pi} & \iint \| \xi(\omega) - \tilde{\xi}(\tilde{\omega})\| ~\pi(d\omega,d\tilde{\omega})\\
\textnormal{s.t. } & \pi\left(A\times\mathbb{R}^m \middle|\mathcal{F}_{t}\otimes\tilde{\mathcal{F}_{t}}\right)=\left.\mathbb{P}\Bigl[A\right\vert \mathcal{F}_{t}\Bigr]\hspace{5mm}A\in\mathcal{F}_{T};~t=1,\ldots,T\\
 & \pi\left(\mathbb{R}^m \times B\middle|\mathcal{F}_{t}\otimes\tilde{\mathcal{F}_{t}}\right)=\tilde{\mathbb{P}}\left[B\middle|\tilde{\mathcal{F}_{t}}\right]\hspace{5mm}B\in\tilde{\mathcal{F}}_{T};~t=1,\ldots,T.
\end{align*}}
\end{defn}
To interpret this definition, the nested distance between two multistage probability distributions is obtained by minimizing over all transportation plans $\pi$, which are compatible with the filtration structures. For a single period (i.e., $T=1$), the nested distance coincides with the Kantorovich-Wasserstein distance. The following basic theorem for stability of multistage stochastic optimization problems was proved by Pflug and Pichler \cite[Th.~6.1]{PflugPichler12}.

\begin{thm}
\label{thm: Lipschitz continuity wrt nested dist}
Let $\mathbb{P}$ and $\tilde{\mathbb{P}}$ be nested distributions with filtrations $\Fc$ and $\tilde{\Fc}$, respectively. Consider the multistage stochastic optimization problem
\[
v(\mathbb{P}):=\inf\left\{ \mathbb{E}^{\mathbb{P}} [Q(\xi,x)]\colon x\in\mathbb X, x\lhd\Fc \right\},
\]
where $Q$ is convex in the decisions $x=(x_1,\dots, x_T)$ for any $\xi$ fixed, and Lipschitz with
constant $L$ in the scenario process $\xi=(\xi_1, \dots, \xi_T)$ for any $x$ fixed. The set $\mathbb X$ is assumed to be convex and the constraint $x\lhd \Fc$ means that the decisions can be random variables, but must be adapted to the filtration $\Fc$, i.e., must be nonanticipative. Then the objective values $v(\mathbb{P})$ and $v(\tilde{\mathbb{P}})$ satisfy
\[
\left|v(\mathbb{P})-v(\tilde{\mathbb{P}})\right|\le L\cdot\nd(\mathbb{P},\tilde{\mathbb{P}}) \, .
\]
\end{thm}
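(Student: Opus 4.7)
The plan is to prove $v(\mathbb{P})\le v(\tilde{\mathbb{P}})+L\cdot\nd(\mathbb{P},\tilde{\mathbb{P}})$ by a coupling/averaging construction; the reverse inequality then follows by symmetry. For $\varepsilon>0$ pick an $\varepsilon$-optimal strategy $\tilde{x}=(\tilde{x}_1,\dots,\tilde{x}_T)$ for $v(\tilde{\mathbb{P}})$ and an $\varepsilon$-optimal nested transport plan $\pi$ realising $\nd(\mathbb{P},\tilde{\mathbb{P}})$. On the coupling space $(\Omega\times\tilde{\Omega},\pi)$ both $\xi$ and $\tilde{\xi}$ live, the filtrations $\mathcal{F}$ and $\tilde{\mathcal{F}}$ pull back, and the marginals are $\mathbb{P}$ and $\tilde{\mathbb{P}}$. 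The candidate strategy on $\Omega$ is the conditional average
\[
\hat{x}_t := \mathbb{E}^{\pi}[\tilde{x}_t\mid\mathcal{F}_t],\qquad t=1,\dots,T.
\]
By the $\mathcal{F}_t\otimes\tilde{\mathcal{F}}_t$-marginal identity that defines $\nd$, the random variable $\hat{x}_t$ is a function of the $\omega$-coordinate only, hence lives on $\Omega$ and is $\mathcal{F}_t$-adapted; convexity of $\mathbb{X}$ gives $\hat{x}\in\mathbb{X}$, so $\hat{x}$ is admissible for $v(\mathbb{P})$.

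The crucial analytic step is to replace the separate conditional expectations $\mathbb{E}^{\pi}[\,\cdot\mid\mathcal{F}_t]$ by one common conditioning on $\mathcal{F}_T$, so that Jensen can be applied with $\xi$ regarded as known. This is where the bicausal structure of $\pi$ earns its keep: the nested marginal constraint $\pi(A\times\mathbb{R}^m\mid\mathcal{F}_t\otimes\tilde{\mathcal{F}}_t)=\mathbb{P}[A\mid\mathcal{F}_t]$ for $A\in\mathcal{F}_T$ says that $\mathcal{F}_T$ is conditionally independent of $\tilde{\mathcal{F}}_t$ given $\mathcal{F}_t$. A short verification (testing against $\mathcal{F}_T$-measurable functions) gives the key lemma: for any $\tilde{\mathcal{F}}_t$-measurable $Y\in L^1(\pi)$,
\[
\mathbb{E}^{\pi}[Y\mid\mathcal{F}_T]=\mathbb{E}^{\pi}[Y\mid\mathcal{F}_t]\quad\pi\text{-a.s.}
\]
Applied to $Y=\tilde{x}_t$, this yields $\hat{x}_t=\mathbb{E}^{\pi}[\tilde{x}_t\mid\mathcal{F}_T]$ $\pi$-a.s., so in the integrand we may view $\hat{x}$ as the vector of conditional expectations of $\tilde{x}$ relative to the single sigma-algebra $\mathcal{F}_T$. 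Since $\xi$ is $\mathcal{F}_T$-measurable and $x\mapsto Q(\xi,x)$ is convex, conditional Jensen gives $Q(\xi,\hat{x})\le\mathbb{E}^{\pi}[Q(\xi,\tilde{x})\mid\mathcal{F}_T]$ and therefore $\mathbb{E}^{\pi}[Q(\xi,\hat{x})]\le\mathbb{E}^{\pi}[Q(\xi,\tilde{x})]$.

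The rest is the Lipschitz estimate in $\xi$:
\[
\mathbb{E}^{\pi}[Q(\xi,\tilde{x})]\le\mathbb{E}^{\pi}[Q(\tilde{\xi},\tilde{x})]+L\,\mathbb{E}^{\pi}\|\xi-\tilde{\xi}\|=\mathbb{E}^{\tilde{\mathbb{P}}}[Q(\tilde{\xi},\tilde{x})]+L\,\mathbb{E}^{\pi}\|\xi-\tilde{\xi}\|.
\]
Chaining with $v(\mathbb{P})\le\mathbb{E}^{\mathbb{P}}[Q(\xi,\hat{x})]=\mathbb{E}^{\pi}[Q(\xi,\hat{x})]$ and recalling that $\tilde{x}$ and $\pi$ are both $\varepsilon$-optimal, one arrives at $v(\mathbb{P})\le v(\tilde{\mathbb{P}})+L\cdot\nd(\mathbb{P},\tilde{\mathbb{P}})+O(\varepsilon)$; letting $\varepsilon\downarrow 0$ and symmetrising finishes the proof.

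The main obstacle is the causality identity $\mathbb{E}^{\pi}[Y\mid\mathcal{F}_T]=\mathbb{E}^{\pi}[Y\mid\mathcal{F}_t]$ for $\tilde{\mathcal{F}}_t$-measurable $Y$. This is exactly the point where the nested distance does strictly more work than the plain Kantorovich--Wasserstein distance; it must be teased out of the two families of conditional-marginal identities that define $\nd$, and the feasibility of $\hat{x}$ (its $\Omega$-measurability and $\mathcal{F}_t$-adaptedness) rests on the same structural property. Once this lemma is in hand the remainder reduces to conditional Jensen, a Lipschitz bound, and symmetrisation — all routine.
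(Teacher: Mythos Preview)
The paper does not actually prove this theorem; it merely states it in the Appendix and attributes the proof to Pflug and Pichler \cite[Th.~6.1]{PflugPichler12}. Your argument is correct and is precisely the strategy used in that reference: transport an $\varepsilon$-optimal decision $\tilde{x}$ from $\tilde{\mathbb{P}}$ to $\mathbb{P}$ via $\hat{x}_t=\mathbb{E}^\pi[\tilde{x}_t\mid\mathcal{F}_t]$, exploit the bicausality of the nested transport plan to conclude the conditional-independence identity $\mathbb{E}^\pi[\tilde{x}_t\mid\mathcal{F}_T]=\mathbb{E}^\pi[\tilde{x}_t\mid\mathcal{F}_t]$, then apply conditional Jensen in the decision variable and the Lipschitz bound in the scenario variable. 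The only point you gloss over slightly is that $\hat{x}\in\mathbb{X}$ strictly requires $\mathbb{X}$ to be closed (so that conditional expectations of $\mathbb{X}$-valued random vectors remain in $\mathbb{X}$); this is implicit in the original source as well and is a harmless omission here.
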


Finite scenario trees are much easier to work with than general stochastic processes. For finite trees, where every node $m$ has a unique predecessor, we write $m+$ for the set of its immediate successors. Denote by $\mathcal{N}_{t}$ the set of all nodes at stage $t$ of the tree model $\mathbb{P}$. For a node $i \in m+$ let $\mathbb{P}[i|m]$ be the conditional transition probability from $m$ to $i\,$.

\begin{defn}
\label{def: nest dist on tree}
The nested distance for scenario trees $\mathbb{P}$ and $\mathbb{\tilde{P}}$ is defined as
{\small
\begin{align}[left = \empheqlVert\,]
\begin{split}\nd(\mathbb{P},\tilde{\mathbb{P}}) := \min_{\pi \geq 0}\  & \sum_{i}\sum_{j}\pi_{i,j}\cdot D_{i,j}\\
\textnormal{s.t. } & \sum_{j \in l+}\pi(i,j|k,l)=\mathbb{P}[i|k]\hspace{5mm}\forall i \in k+;\forall(k,l)\in(\mathcal{N}_{t}\times\tilde{\mathcal{N}}_{t}); 1 \leq t < T\\
 & \sum_{i \in k+}\pi(i,j|k,l)=\tilde{\mathbb{P}}[j|l]\hspace{5mm}\forall j \in l+;\forall(k,l)\in(\mathcal{N}_{t}\times\tilde{\mathcal{N}}_{t}); 1 \leq t < T\\
 & \pi_{i,j} \geq 0 \textnormal{ and } \sum_{i}\sum_{j}\pi_{i,j}=1 \, .
\end{split}
\label{eq: definestdisttree}
\end{align}
}The matrix $\pi$ of transportation plans and the matrix $D$ carrying the pairwise distances of the paths are defined on $\mathcal{N}_{T}\times\tilde{\mathcal{N}}_{T}$. The conditional joint probabilities $\pi(i,j|k,l)$ in \eqref{eq: definestdisttree} are given by $ \pi(i,j|k,l)=\pi_{i,j} \cdot [\sum\limits _{i^{\prime}\in k+}\sum\limits _{j^{\prime}\in l+}\pi_{i^{\prime},j^{\prime}}]^{-1} \, .$
\end{defn}

\noindent{\bf Approximation of random processes by finite trees.} The subsequent result follows from \cite[Prop.~4.26]{PflugPichler14}.

\begin{thm}
\label{thm: approx by tree}
If the stochastic process $\xi=(\xi_1, \dots, \xi_T)$ satisfies the Lipschitz condition given in Assumption A4.5 in Section \ref{subsec: nested dist balls}, then for every $\varepsilon > 0$ there is a stochastic process with distribution $\tilde{\mathbb{P}}$, which is  defined on a finite tree and which satisfies
$$\nd(\mathbb{P}, \tilde{\mathbb{P}}) \le \varepsilon,$$
where $\mathbb P$ is the distribution of $\xi$ on the filtered space $(\Omega, \mathcal F)$.
\end{thm}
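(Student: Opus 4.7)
\textbf{Proof proposal for Theorem~\ref{thm: approx by tree}.}

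The plan is to build the approximating tree $\tilde{\mathbb P}$ stagewise, top-down, by quantizing the conditional laws at each stage, and then to control the accumulated error via the stagewise bound on the nested distance that was already invoked in the proof of Theorem~\ref{thm:LargeDeviation}. Specifically, the key inequality in the proof of Theorem~\ref{thm:LargeDeviation} reads
\[
\nd(\mathbb P,\tilde{\mathbb P}) \;\le\; \sum_{t=1}^{T}\varepsilon_{t}\,\gamma_{t}\prod_{s=t+1}^{T}(1+\gamma_{s}),
\]
valid whenever the conditional Wasserstein distance $\mathsf d(\mathbb P_t(\cdot|x),\tilde{\mathbb P}_t(\cdot|x))$ is bounded by $\varepsilon_t$ at stage $t$. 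Hence it suffices to produce, for each prescribed sequence $(\varepsilon_t)_{t=1}^T$ of conditional tolerances, a finitely supported tree model $\tilde{\mathbb P}$ whose stage-$t$ conditional distributions are within $\varepsilon_t$ in Wasserstein distance of those of $\mathbb P\,$, after which the $\varepsilon_t$ are chosen small enough to make the total bound not exceed $\varepsilon$.

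The construction proceeds inductively. At $t=1$, quantize the marginal of $\xi_1$ by a finitely supported measure $\tilde{\mathbb P}_1$ with $\mathsf d(\mathbb P_1,\tilde{\mathbb P}_1)\le \varepsilon_1\,$; this is always possible for probability measures on $\mathbb R^m$ with finite first moment, and is standard in optimal quantization (see, e.g., \cite[Chap.~4]{PflugPichler14}). This yields the root-level nodes of the tree. Inductively, suppose the tree has been built up to stage $t-1$, with nodes carrying representative values $x^{(k)}\in D_1\times\dots\times D_{t-1}\,$. At each such node $k$, quantize the conditional distribution $\mathbb P_t(\cdot|x^{(k)})$ by a finitely supported measure with Wasserstein error at most $\varepsilon_t\,$; the successors of $k$ in the tree are exactly the atoms of this discrete measure, with the corresponding masses as transition probabilities. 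The Lipschitz condition \eqref{eq:Lip} guarantees that the map $x\mapsto \mathbb P_t(\cdot|x)$ is continuous (in fact uniformly so on the compact $D$), which is exactly what is needed so that the pointwise quantization at each $x^{(k)}$ yields a globally well-behaved conditional kernel; it is also what underlies the stagewise error bound above.

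The main obstacle is the passage from pointwise-at-$x^{(k)}$ approximation of the conditional law to a bound on the nested distance rather than merely on finitely many marginals. This is exactly what the Lipschitz assumption in A4.5 handles: the conditional kernel of the constructed tree, read off at any history $x$, only deviates from the true kernel by at most $\varepsilon_t$ plus a term of order $\gamma_t\cdot\text{(quantization error of histories)}$, and the compounding of these errors along the filtration is exactly captured by the product $\prod_{s=t+1}^T(1+\gamma_s)$. Finally, choosing
\[
\varepsilon_t \;\le\; \frac{\varepsilon}{T\,\gamma_t\prod_{s=t+1}^{T}(1+\gamma_s)}
\]
yields $\nd(\mathbb P,\tilde{\mathbb P})\le\varepsilon\,$, completing the construction. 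The full details can be found in \cite[Prop.~4.26]{PflugPichler14}, on which the statement is based.
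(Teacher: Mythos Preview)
Your proposal is correct and aligns with the paper's own treatment: the paper does not give an independent proof of this theorem but simply states that it follows from \cite[Prop.~4.26]{PflugPichler14}, which is exactly the reference you invoke after sketching the stagewise quantization construction and the associated error-compounding bound. Your sketch is in fact a faithful outline of the argument behind that proposition, so nothing is missing relative to what the paper provides.
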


\end{document}